\newcommand\xoutpars[1]{\let\helpcmd\xout\parhelp#1\par\relax\relax}
\newcommand\soutpars[1]{\let\helpcmd\sout\parhelp#1\par\relax\relax}
\long\def\parhelp#1\par#2\relax{%
  \helpcmd{#1}\ifx\relax#2\else\par\parhelp#2\relax\fi%
}
\providecommand*{\toclevel@titlech}{0} 
\edef\toclevel@authorch{\the\numexpr\toclevel@titlech+1} 
\tikzstyle{Arrow} = [
\theoremstyle{plain}
\newtheorem{theorem}{Theorem}
\newtheorem{proposition}{Proposition}
\theoremstyle{remark}
\newtheorem{remark}{Remark}
\newtheorem{assumption}{Assumption}
\newenvironment{assumptionp}[1]{
  
  \assumptionalt
}{\endassumptionalt}
\begin{document}

\author[1]{Max Rubinstein}
\author[2]{Maria Cuellar}
\author[3]{Daniel Malinsky}

\affil[1]{RAND Corporation; E-mail: mrubinstein@rand.org}
\affil[2]{Department of Criminology and Department of Statistics and Data Science, University of Pennsylvania; E-mail: mcuellar@sas.upenn.edu}
\affil[3]{Department of Biostatistics, Columbia University; E-mail: d.malinsky@columbia.edu}

\title{Mediated probabilities of causation}


\maketitle

\begin{abstract}
We propose a set of causal estimands that we call ``the mediated probabilities of causation.'' These estimands quantify the probabilities that an observed negative outcome was induced via a mediating pathway versus a direct pathway in a stylized setting involving a binary exposure or intervention, a single binary mediator, and a binary outcome. We outline a set of conditions sufficient to identify these effects given observed data, and propose a doubly-robust projection based estimation strategy that allows for the use of flexible non-parametric and machine learning methods for estimation. We argue that these effects may be more relevant than the probability of causation, particularly in settings where we observe both some negative outcome and negative mediating event, and we wish to distinguish between settings where the outcome was induced via the exposure inducing the mediator versus the exposure inducing the outcome directly. We motivate these estimands by discussing applications to legal and medical questions of causal attribution.
\end{abstract}

\textbf{Keywords:} Mediation analysis, probability of causation, machine learning, non-parametrics, causal inference

\section{Introduction}

The probability of causation is a quantity that has been proposed to answer the question: what is the probability that an observed outcome was caused by a specific exposure and not by something else? In other words, can an outcome be attributed to a specific exposure? This question is often of interest in the law. Suppose that a judge or jury must determine whether an individual claimant's cancer (e.g., non-Hodgkin's lymphoma) was caused by their exposure to a specific chemical (e.g., an herbicide). In other words, can the individual's cancer be attributed to this chemical exposure?

The probability of causation is therefore a parameter that conditions on the observed outcome and the exposure received. This contrasts to other commonly studied parameters, such as the average treatment effect, which conditions on nothing; the conditional average treatment effect, which conditions on pre-exposure covariates; or, the average effect of treatment on the treated, which conditions only on treatment. In the cancer attribution case, one may evaluate whether herbicide exposure causes this specific type of cancer in the population (e.g., estimating the ATE). However, the ATE only informs whether the chemical is harmful in general, and it does not answer how likely it is that \textit{a specific individual's} cancer was caused by the chemical exposure and not by something else (e.g. smoking or other chemicals). A line of research has focused on defining and estimating probabilities of causation \cite{lagakos1986assigned, tian2000probabilities, dawid2017probability, cuellar2020non, dawid2022effects}.\footnote{One subtlety of these analyses is whether the question of interest is about a specific individual or about a group of individuals. Researchers have proposed various approaches \cite{cuellar2020non, dawid2022effects, dawid2014fitting} to address this issue. Clearly, in a legal setting similar to the example above, the question is about a single individual who has cancer. However, transferring conclusions about a group to an individual is challenging \cite{dawid2017individual}. For this article, we therefore take the stance that we seek to estimate parameters defined by a group of individuals that share characteristics $X=x$. Therefore, if one seeks to know the probability of causation for a specific individual, one can estimate it for a group who shares characteristics $x$ with the individual.}

However, the probability of causation may be too coarse of a measure for some settings. Specifically, it fails to pinpoint any specific mechanism or hypothesized causal pathway that induces the outcome through a mediator. To illustrate, consider the recent lawsuit against Harvard University alleging discrimination against Asian American applicants.\footnote{https://www.nytimes.com/2022/12/02/us/asian-american-college-applications.html} The probability of causation ostensibly helps formalize the plaintiffs' discrimination claim. Specifically, we can imagine a world where a rejected Asian American applicant were not (perceived by the admissions officer to be) Asian American, and ask whether, in this counterfactual world, their application would have been accepted. However, we might also wonder by what mechanism might the applicant's identity cause the admissions decision. One hypothesized pathway posited in this case is that (perceived) Asian American racial identity led admissions officers to give worse subjective personality assessments, and in turn deny applications. From the plaintiffs' perspectives, admissions rejections via this ``indirect'' effect would be impermissible. On the other hand, there may exist other relevant pathways that may be considered legally or morally permissible or not, depending on one's interpretation of the relevant legal statutes or background ethical commitments. 

Other legal and scientific questions also hinge on whether some intermediate factor induces an outcome. For example, in jury selection, attorneys sometime question whether a juror's exclusion might be due to factors causally related to -- or ``downstream from'' -- race. In evaluating the harmful effect of some chemical exposure on health outcomes, scientists or advocates may posit a specific mediating element such as a negative physiological response -- for example, a change in hormone levels or gene expression -- that ultimately leads to disease or death. In clinical contexts, a mediating negative event may also arise in the context of attributing ``cause of death'': for example, someone may hypothesize that a patient's death was caused by an adverse reaction to vaccination via a cardiovascular mechanism, having observed post-vaccination myocarditis. However, the probability of causation does not distinguish between possible pathways.

We therefore introduce and examine a new class of causal estimands that we call the ``mediated probabilities of causation.'' These estimands separately quantify the probability of causation via direct and indirect pathways, and together sum to a ``total'' mediated probability of causation. Through considering the simplified setting of a single binary exposure, mediator, and outcome, our primary contribution is to define these causal estimands and outline assumptions sufficient to identify these estimands in terms of observed data, building on the existing work on natural effects and path-specific effects \cite{imai2010general, steen2017flexible, nabi2018estimation, malinsky2019potential} (noting that these effects decompose average treatment effects, and do not condition on the observed outcome). As a second contribution, we propose a doubly-robust estimation strategy that targets projections of the mediated probabilities of causation, extending the estimation method outlined in \cite{cuellar2020non}, who introduced this approach to estimate the probability of causation. This strategy allows for the use of modern non-parametric and machine learning methods that require minimal modeling assumptions, while still yielding root-n consistent and asymptotically normal estimates under relatively mild conditions.

The closest related work to ours is \cite{dawid2024bounding}, wherein the authors derive bounds for the probability of causation under mediation only when there is no direct effect and no confounding. Our work is more general in that we allow for the existence of direct effects. Furthermore, we provide a more flexible estimating procedure. One of the strengths of our approach is that we make explicit the identification assumptions sufficient to arrive at observable quantities that can be estimated from data. We provide discussion and intuition about which assumptions may be strong for some settings; however, whether these assumptions hold should be evaluated for each application on a case-by-case basis. Thus, our contributions are primarily theoretical. Nevertheless, we believe this theory may be relevant for applications similar in structure to the aforementioned legal or medical attribution questions, when the assumptions are deemed plausible and the requisite data is available.

Our paper proceeds as follows: in Section \ref{sec:background} we briefly review the probability of causation, including the formal definition, the typical identifying assumptions, and an expression for the identified estimand. In Section \ref{sec:identification} we propose the three causal estimands that we call the mediated probabilities of causation: the probability of indirect causation, the probability of direct causation, and the total mediated probability of causation. We outline a set of assumptions sufficient to identify these estimands in observed data and provide expressions for these identified estimands. In Section \ref{sec:estimation} we propose a non-parametric projection-based approach to estimate these quantities, following \cite{cuellar2020non}. In Section \ref{sec:simulations} we illustrate our proposed methods using a simulation study motivated by the Harvard discrimination case. In Section \ref{sec:discussion} we conclude with general discussion of the problem, limitations of our study, and possible areas for future research.

\section{Background}\label{sec:background}
We begin by reviewing the probability of causation, the identifying assumptions sufficient to estimate this quantity in observed data, and the identified data functional \cite{cuellar2020non}. 

Let $Y$ denote a binary variable where $Y = 1$ reflects a negative outcome. Let $A$ denote a binary exposure, and $X$ denote a d-dimensional covariate vector, which may include continuous variables. Let $Y(a)$ denote the potential outcome for an individual under exposure assignment $A = a$. $Y(1)$ will denote the outcome under exposure. We can define the probability of causation as:

\begin{align}\label{eqn:pcause}
    \tau(x) &= P(Y(0) = 0 \mid Y(1) = 1, X = x)
\end{align}

\noindent This formalizes the probability that an individual with covariates $X=x$, who would have experienced the negative outcome under exposure, would \emph{not} have experienced the outcome under \emph{no} exposure. We do not observe both potential outcomes $Y(1)$ and $Y(0)$ in practice. We instead observe $n$ independent copies of the data $O = (Y, A, X)$. We can nevertheless identify $\tau(x)$ in observed data by making the following assumptions:

\begin{assumption}[Y consistency]\label{eqn:pc1}
    $Y = AY(1) + (1-A)Y(0)$
\end{assumption}

\begin{assumption}[A-Y ignorability]\label{eqn:pc2}
    $A \perp \{Y(0), Y(1)\} \mid X$
\end{assumption}

\begin{assumption}[A positivity]\label{eqn:pc3}
    $P\{\min_a P(A = a \mid X) \ge \epsilon \} = 1 \qquad \epsilon > 0$
\end{assumption}

\begin{assumption}[Y positivity]\label{eqn:pc4}
    $P\{P(Y = 1 \mid A = 1, X) \ge \epsilon \} = 1 \qquad \epsilon > 0$
\end{assumption}

\begin{assumption}[Y monotonicity]\label{eqn:pc5}
    $Y(1) \ge Y(0)$
\end{assumption}

Proposition \ref{prop:1} presents the identification result. For completeness, the proof is provided in Appendix \ref{app:proofs}.

\begin{proposition}\label{prop:1}
    Under Assumptions (\ref{eqn:pc1})-(\ref{eqn:pc5}), $\tau(x)$ is identified in the observed data distribution as,
    \begin{align*}
        \tau(x) &= 1 - \frac{P(Y = 1 \mid A = 0, X = x)}{P(Y = 1 \mid A = 1, X = x)}.
    \end{align*}
\end{proposition}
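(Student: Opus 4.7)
The plan is to unfold the definition of $\tau(x)$ via conditional probability, use monotonicity to reduce the joint distribution of potential outcomes to their marginals, and then apply ignorability plus consistency to replace the potential outcomes with their observable counterparts, with positivity ensuring that all conditional probabilities are well-defined.

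First, I would write
\[
\tau(x) = \frac{P(Y(0)=0,\, Y(1)=1 \mid X=x)}{P(Y(1)=1 \mid X=x)}
\]
directly from the definition of conditional probability in Equation \ref{eqn:pcause}.

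Second, I would exploit Assumption \ref{eqn:pc5} (monotonicity) to simplify the numerator. Since $Y(1) \ge Y(0)$ holds almost surely, the event $\{Y(0)=1\}$ is contained in $\{Y(1)=1\}$, so
\[
P(Y(0)=0,\, Y(1)=1 \mid X=x) = P(Y(1)=1 \mid X=x) - P(Y(0)=1 \mid X=x).
\]
Substituting back gives $\tau(x) = 1 - P(Y(0)=1 \mid X=x)/P(Y(1)=1 \mid X=x)$.

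Third, I would identify each counterfactual marginal. By Assumption \ref{eqn:pc2} (ignorability), $P(Y(a)=1 \mid X=x) = P(Y(a)=1 \mid A=a, X=x)$, and by Assumption \ref{eqn:pc1} (consistency) this equals $P(Y=1 \mid A=a, X=x)$. Applying this for $a=0$ and $a=1$ yields the stated expression. Finally, I would note that Assumptions \ref{eqn:pc3} and \ref{eqn:pc4} guarantee that the conditioning events and the denominator $P(Y=1 \mid A=1, X=x)$ are bounded away from zero, so the ratio is well-defined almost surely.

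There is no real obstacle here; every step is a standard manipulation, and the only subtlety worth flagging carefully is the monotonicity reduction in the second step, since this is precisely where the counterfactual joint distribution (which is generally not identifiable) collapses into a difference of identifiable marginals. I would therefore write this step out explicitly rather than glossing it, and leave the other steps as brief invocations of the corresponding assumptions.
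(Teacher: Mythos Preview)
Your proposal is correct and uses the same ingredients as the paper's proof (Bayes' rule, monotonicity to collapse the counterfactual joint into a difference of marginals, then ignorability plus consistency to identify each marginal). The only cosmetic difference is ordering: the paper first conditions on $A=1$ via ignorability and consistency (passing through $\tilde{\tau}(x)$) before applying Bayes' rule and monotonicity, whereas you apply Bayes' rule and monotonicity first and invoke ignorability and consistency only at the end; both routes are equally valid.
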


We may alternatively define a probability of causation that conditions on the observed outcomes and exposure assignment rather than the potential outcomes, since the counterfactual quantities $Y(1)$ are unobserved for observations where $A = 0$ and are therefore not of practical interest. Specifically, we may define the probability of causation as,

\begin{align}
\tilde{\tau}(x) &= P(Y(0) = 0 \mid Y = 1, A = 1, X = x).
\end{align}
    
\noindent It is easy to see that this expression is equivalent to $\tau(x)$ under assumptions (\ref{eqn:pc1})-(\ref{eqn:pc2}). However, when defining the probability of causation as $\tilde{\tau}(x)$, we can weaken the required identifying assumptions. For example, in place of A-Y ignorability it is sufficient to identify $\tilde{\tau}(x)$ that $A \perp Y(0) \mid X$. We nevertheless introduce the estimands conditional on the counterfactual quantities for conceptual clarity, noting throughout these alternative, and arguably more intuitive, formulations.

\section{Mediated probabilities of causation}\label{sec:identification}

To motivate the mediated probabilities of causation, consider the case where we observe a binary causal descendant of the exposure, denoted $M$, that mediates the effect of the exposure $A$ on the outcome $Y$, so that $O = (X, A, M, Y)$. Moreover, we let $M = 1$ denote some negative condition. Figure \ref{fig:dag} illustrates the assumed causal structure of the observed data distribution distribution.

\begin{figure}[H]
\caption{Assumed causal structure}
\begin{center}
\begin{tikzpicture}

[
array/.style={rectangle split, 
	rectangle split parts = 3, 
	rectangle split horizontal, 
    minimum height = 2em
    }
]
 \node (0) {X};
 \node [right =of 0] (1) {A};
 \node [right =of 1] (2) {M};
 \node [right =of 2] (3) {Y};

 \draw[Arrow] (0.east) -- (1.west);
 \draw[Arrow] (1.east) -- (2.west);
 \draw[Arrow] (2.east) -- (3.west);
 \draw[Arrow] (0) to [out = 25, in = 160] (2);
 \draw[Arrow] (0) to [out = 25, in = 160] (3);
 \draw[Arrow] (1) to [out = 25, in = 160] (3);
\end{tikzpicture}
\end{center}
\label{fig:dag}
\end{figure}
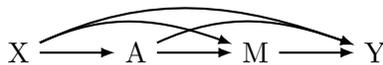

We next define potential outcomes to illustrate how information on the mediator adds complexity to this problem. We consider three specific estimands: the total mediated probability of causation, the probability of indirect causation, and the probability of direct causation. We motivate and describe each quantity in greater detail below.

\subsection{Total mediated probability of causation}

Viewing the mediator as an ``intermediate outcome'' possibly affected by exposure, we define a probability of causation on the stratum where $M(1) = 1$ and outcome $Y(1) = Y(1,M(1)) = 1$ (noting that $Y(a) = Y(a, M(a))$ here and throughout). In other words, we define a probability of causation where the potential mediator under exposure and potential outcome under exposure are both negative. We call this the ``total mediated probability of causation.'' Mathematically, we can express this as,

\begin{align}
\delta(x) = P(Y(0) = 0 \mid Y(1) = 1, M(1) = 1, X = x).
\end{align}

\noindent This estimand may be motivated by the question: given that an individual would experience a negative outcome and negative mediator under an exposure (i.e. $Y(1) = 1$ and $M(1) = 1$), what is the probability that the negative outcome would not have occurred absent the exposure? 

The quantity $\delta(x)$ as defined conditions on the potential rather than observed outcomes. Just as we could define the probability of causation conditional on only observed quantities, we can also define the total mediated probability of causation quantity conditional on the observed outcome, mediator, and exposure:
\begin{align}
\tilde{\delta}(x) = P(Y(0) = 0 \mid Y = 1, M = 1, A = 1, X = x)
\end{align}

\noindent This quantity expresses the probability that an individual who received the exposure and experienced a negative mediator and outcome under exposure would not have experienced the negative outcome had they not been exposed. These two quantities, $\delta(x)$ and $\tilde{\delta}(x)$, are equivalent under the identifying assumptions that we outline in Section \ref{sec:identification} below.

Importantly, the total mediated probability of causation is not generally equal to the probability of causation. Instead, the probability of causation can be expressed as a weighted average of the total mediated probability of causation and an analogous term defined on the stratum where $M(1) = 0$. We can see this by iterating expectations for $\tau(x)$ over the conditional distribution of $M(1)$:
\begin{align*}
    \tau(x) &= \sum_{m = 0, 1} P(Y(0) = 0 \mid Y(1) = 1, M(1) = m, X = x) \\
    \nonumber&\times P(M(1) = m \mid Y(1) = 1, X = x) \\
    &= \delta(x)\xi(x) + \delta'(x)(1 - \xi(x))
\end{align*}

\noindent where $\xi(x) = P(M(1) = 1 \mid Y(1) = 1, X = x)$. This expression highlights the relationship between these quantities, and shows that we should not in general expect the probability of causation to equal the total mediated probability of causation. We further discuss the term $\delta'(x)$ and other possible decompositions in Section \ref{ssec:estdisc} and in Appendix \ref{app:othres}.

\subsection{Probabilities of direct and indirect causation}

The total mediated probability of causation decomposes into the sum of what we call the probabilities of direct and indirect causation. Specifically, we iterate expectations of $\delta(x)$ over the conditional distribution of the cross-world counterfactual quantity $Y(1, M(0))$ to obtain these quantities.

\begin{align*}
\delta(x) &= \sum_{y = 0, 1} P(Y(1, M(0)) = y, Y(0, M(0)) = 0 \mid Y(1, M(1)) = 1, M(1) = 1, X = x) \\
&= \psi(x) + \zeta(x)
\end{align*}

\noindent where
\begin{align}\label{eqn:pnie}
\psi(x) = P(Y(1, M(0)) = 0, Y(0, M(0)) = 0 \mid Y(1, M(1)) = 1, M(1) = 1, X = x)
\end{align}

\noindent and
\begin{align}\label{eqn:pnde}
\zeta(x) &= P(Y(1, M(0)) = 1, Y(0, M(0)) = 0 \mid Y(1, M(1)) = 1, M(1) = 1, X = x).
\end{align}

\noindent $\psi(x)$ represents the probability of indirect causation, while $\zeta(x)$ is the probability of direct causation. We use these terms because $\psi(x)$ represents the probability that the negative outcome was induced via the $A \to M \to Y$ pathway, while the term $\zeta(x)$ reflects the probability that the outcome was induced via the $A \to Y$ pathway. The total mediated probability of causation is again simply the sum of these two quantities.

We can also define versions of these estimands that only condition on observed variables. These are again equivalent to the quantities introduced above under the identifying assumptions we outline below.
\begin{align}
\tilde{\psi}(x) &= P(Y(1, M(0)) = 0, Y(0, M(0)) = 0 \mid Y = 1, M = 1, A = 1, X = x) \\
\tilde{\zeta}(x) &= P(Y(1, M(0)) = 1, Y(0, M(0)) = 0 \mid Y = 1, M = 1, A = 1, X = x) 
\end{align}

Which estimand is of primary interest depends on the specific research question, and possibly which pathways are legally or ethically impermissible. For example, if the indirect channel is legally or ethically impermissible, then we may wish to know the probability of indirect causation $\psi(x)$; if the direct channel, then the probability of direct causation $\zeta(x)$. If either pathway are of interest, and we observe a negative mediating event, then the total mediated probability of causation $\delta(x)$. Finally, when any possible pathway through which the exposure might induce the outcome are of interest, then the probability of causation $\tau(x)$ may be the relevant causal quantity. 

\subsection{Other related quantities}\label{ssec:estdisc}

One possible concern is that our proposed probabilities of direct and indirect causation decompose the total mediated probability of causation rather than the probability of causation itself. Mathematically, we could have instead defined quantities similar to $\psi(x)$ and $\zeta(x)$ that do not condition on the potential mediator $M(1)$, and that have the seemingly desirable property of summing to the probability of causation instead of the total mediated probability of causation. Moreover, these quantities would capture information from the term $\delta'(x)$ defined above, which represented an analogous total mediated probability of causation term defined on the $M(1) = 0$ stratum.

We argue that these quantities have little practical interest: once we observe the mediator, we know whether or not the negative mediating event occurred, and it is thus natural to condition on this information. And if we do not observe the post-exposure mediator, we show in Appendix \ref{app:othres} that we cannot generally identify these probabilities, so that the question becomes strictly theoretic. Finally, we may be in a setting where we observe a mediator, but the negative outcome did not occur, so that the term $\delta'(x)$ and an analogous decomposition into direct and indirect pathways may seem desirable. However, if the negative mediator did not occur under exposure, then, under the monotonicity conditions we will require for causal identification, the outcome can only have been induced via a direct pathway -- that is, pathways not through the mediator.\footnote{Intuitively, monotonicity rules out the case where the exposure is protective of the mediator; that is, the mediating event would have occurred absent exposure but did not occur due to the exposure.} No further decomposition of this term is necessary. In short, because most conceivable scenarios where we would wish to estimate probabilities of direct versus indirect causation involve the case where we observe a negative mediating event under exposure, we argue that the total mediated probability of causation and its decomposition are the most relevant quantities to consider in practice. We nevertheless provide expressions for these related quantities and identification results in Appendix \ref{app:othres}.

\subsection{Identification}

We can identify the mediated probabilities of causation in the observed data assuming that for all values of $a, m$:

\begin{assumption}[Y-M consistency]\label{ass:consistency}
\begin{align*}
    A = a, M = m &\implies Y = Y(a, m) \\
    A = a &\implies M = M(a)
\end{align*}
\end{assumption}

\begin{assumption}[Y-M monotonicity]\label{ass:monotonicity}
\begin{align*}
    Y(1, 1) &\ge Y(1, 0) \ge Y(0, 0), \\
    Y(1, 1) &\ge Y(0, 1), \\
    M(1) &\ge M(0)
\end{align*}
\end{assumption}

\begin{assumption}[A-YM ignorability]\label{ass:aym}
\begin{align*}
    A \perp \{Y(1, 1), Y(1, 0), Y(0, 1), Y(0, 0), M(1), M(0)\} \mid X 
\end{align*}
\end{assumption}

\begin{assumption}[Cross-world ignorability]\label{ass:xworld}
\begin{align*}
    \{Y(1, 1), Y(1, 0), Y(0, 1), Y(0, 0)\} \perp \{M(1), M(0)\} \mid X 
\end{align*}
\end{assumption}

\begin{assumption}[A-M positivity]\label{ass:positivity}
\begin{align*}
    P\{\min_{a, m}P(A = a, M = m \mid X) \ge \epsilon \} = 1, \qquad \epsilon > 0
\end{align*}
\end{assumption}

\begin{assumption}[Y positivity]\label{ass:ypositivity}
\begin{align*}
    P\{P(Y = 1 \mid A = 1, M = 1, X) \ge \epsilon \} = 1, \qquad \epsilon > 0
\end{align*}
\end{assumption}

Assumption (\ref{ass:consistency}) precludes interference between individuals; that is, each individual's potential outcomes only depends on their own mediator and exposure status. Assumption (\ref{ass:monotonicity}) implies that the mediator and exposure can only work to induce the outcomes, and the exposure can only work to induce the mediator, with one exception: the mediator may be protective against the outcome in the absence of the exposure. Assumption (\ref{ass:aym}) implies that the exposure is effectively randomized with the potential outcomes and mediators given the covariates. 

Versions of Assumption (\ref{ass:xworld}) are common in the mediation literature to identify natural effects, and this assumption implies that the potential mediators are independent of the potential outcomes given covariates. This is a strong assumption and is unenforceable even in a randomized experiment: we cannot observe the counterfactuals $M(1)$ ($M(0)$) and $Y(0, m)$ ($Y(1, m)$) for the same individual, since we cannot observe the same individual under multiple values of the exposure. Assessing whether violations of this assumption plausibly occur is ultimately application specific, although examples of such scenarios -- which are in general challenging to envision -- have been discussed at length elsewhere (see, e.g., \cite{andrews2021insights}). In any case, cross-world ignorability is routinely invoked in identifying (average) direct and indirect effects.

Assumption (\ref{ass:positivity}) implies that there is some positive probability of observing all exposure and mediator combinations for any covariate value; finally, assumption (\ref{ass:ypositivity}) implies that there is a positive probability of experiencing the outcome for any covariate value under exposure and the mediator.

\begin{remark}
We can weaken some of the above assumptions to identify $\psi(x)$ or $\delta(x)$ alone; moreover, these assumptions are somewhat stronger than strictly necessary to identify these three estimands. We provide slightly weaker sets of identifying conditions for each individual estimand in Appendix \ref{app:identification}.
\end{remark}

\begin{remark}
Substantially weaker assumptions may identify natural effects. For example, assumption \ref{ass:monotonicity} is not required; assumption \ref{ass:positivity} is stronger than needed (see, e.g., \cite{vansteelandt2017interventional}); and no analogue of assumption \ref{ass:ypositivity} is needed.\footnote{This is required here because the identified estimands involve risk-ratios with this quantity in the denominator. Assumption \ref{ass:ypositivity} ensures these quantities are well-defined for any chosen covariate value.} Finally, in place of assumption \ref{ass:xworld}, one typically assumes,

\begin{align}
\label{eqn:standardym} Y(a, m) &\perp M \mid \{X, A = a\}, \qquad a = 0, 1 \\
\label{eqn:standardxw} Y(a, m) &\perp M(a') \mid X, \qquad m, a, a' = 0,1; a \ne a'.
\end{align}

\noindent Assumptions (\ref{ass:consistency}) and (\ref{ass:aym}) imply equation \ref{eqn:standardym}, and assumption \ref{ass:xworld} alone implies equation (\ref{eqn:standardxw}); however, the reverse is not true.
\end{remark}

Theorem \ref{theorem1} provides identifying expressions for $\delta(x)$ and $\psi(x)$ in terms of the observed data distribution. We define the observed data quantities $\mu_{am}(x) = P(Y = 1 \mid A = a, M = m, X = x)$, and $\gamma_a(x) = P(M = 1 \mid A = a, X = x)$.

\begin{theorem}[Identification]\label{theorem1}
Under assumptions (\ref{ass:consistency})-(\ref{ass:ypositivity}), $\psi(x)$, $\delta(x)$, and $\zeta(x)$ are identified in the observed data as,

\begin{align}\label{ident2}
\psi(x) &= \left[1 - \frac{\mu_{10}(x)}{\mu_{11}(x)}\right]\left[1 - \frac{\gamma_0(x)}{\gamma_1(x)}\right], \\
\delta(x) &= \left[1-\frac{\mu_{00}(x)}{\mu_{11}(x)}\right]\left[1 - \frac{\gamma_0(x)}{\gamma_1(x)}\right] + \left[1-\frac{\mu_{01}(x)}{\mu_{11}(x)}\right]\left[\frac{\gamma_0(x)}{\gamma_1(x)}\right], \\
\zeta(x) &= \delta(x) - \psi(x).
\end{align}
\end{theorem}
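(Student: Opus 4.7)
The plan is to reduce each potential-outcome probability to a function of $\mu_{am}(x)$ and $\gamma_a(x)$ by (i) exploiting monotonicity to collapse joint events into marginal ones, (ii) using cross-world ignorability to detach potential outcomes from potential mediators, and (iii) invoking A-YM ignorability and consistency to bridge to the observed data. I would handle $\psi(x)$ first, then $\delta(x)$ by the same template, and observe that $\zeta(x) = \delta(x) - \psi(x)$ follows immediately from the definition.

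For $\psi(x)$, Assumption \ref{ass:monotonicity} yields $Y(1,m) \ge Y(0,m)$ for each $m$, making the event $\{Y(0,M(0)) = 0\}$ redundant given $\{Y(1,M(0)) = 0\}$. Iterating over $M(0) \in \{0,1\}$ eliminates the $M(0) = 1$ branch, since $Y(1,1) = 1$ forces $Y(1,M(0)) = 1$ there. The surviving term is $P(Y(1,0) = 0 \mid Y(1,1) = 1, M(1) = 1, M(0) = 0, X = x) \cdot P(M(0) = 0 \mid Y(1,1) = 1, M(1) = 1, X = x)$. Cross-world ignorability decouples potential outcomes from potential mediators in each factor, and a further use of monotonicity with Bayes' rule produces ratios $1 - P(Y(1,0) = 1 \mid X)/P(Y(1,1) = 1 \mid X)$ and $1 - P(M(0) = 1 \mid X)/P(M(1) = 1 \mid X)$.

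It remains to show $P(Y(1,m) = 1 \mid X) = \mu_{1m}(x)$ and $P(M(a) = 1 \mid X) = \gamma_a(x)$. The second is immediate from A-YM ignorability and consistency; for the first, A-YM ignorability together with cross-world ignorability implies $Y(1,m) \perp M \mid A = 1, X$, so consistency yields $P(Y(1,m) = 1 \mid X) = P(Y = 1 \mid A = 1, M = m, X) = \mu_{1m}(x)$. The argument for $\delta(x)$ is parallel: iterate $P(Y(0,M(0)) = 0 \mid Y(1,1) = 1, M(1) = 1, X)$ over $M(0)$, where now both branches survive; apply cross-world ignorability and the monotonicity inequalities $Y(1,1) \ge Y(0,m)$ to reduce each $P(Y(0,m) = 0 \mid Y(1,1) = 1, X)$ to $1 - \mu_{0m}(x)/\mu_{11}(x)$; and combine with the weights $1 - \gamma_0(x)/\gamma_1(x)$ and $\gamma_0(x)/\gamma_1(x)$ corresponding to $M(0) = 0$ and $M(0) = 1$ respectively.

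The main obstacle is careful bookkeeping at the cross-world step: each reduction replaces a conditional probability like $P(Y(0,m) = 0 \mid Y(1,1) = 1, M(1) = 1, M(0) = m', X)$ by $P(Y(0,m) = 0 \mid Y(1,1) = 1, X)$ for each pair $(m, m')$, and one must verify that the joint independence in Assumption \ref{ass:xworld} actually licenses this (in particular, conditioning on $Y(1,1) = 1$ is itself conditioning on a potential outcome). The positivity assumptions enter quietly, guaranteeing that every conditional probability and risk ratio along the way is well-defined.
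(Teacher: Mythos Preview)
Your proposal is correct and follows essentially the same route as the paper: for $\psi(x)$ you use monotonicity to drop the redundant $\{Y(0,M(0))=0\}$, iterate over $M(0)$, kill the $M(0)=1$ branch, invoke cross-world ignorability to strip the potential-mediator conditioning, and then apply Bayes' rule with monotonicity to obtain the two risk-ratio factors; for $\delta(x)$ you iterate over $M(0)$ with both branches surviving and reduce each to $1-\mu_{0m}(x)/\mu_{11}(x)$ weighted by $P(M(0)=m\mid M(1)=1,X)$. The only cosmetic difference is that the paper, for $\delta(x)$, applies Bayes' rule and monotonicity \emph{before} iterating over $M(0)$ (writing $\delta(x)=1-P(Y(0,M(0))=1\mid M(1)=1,x)/\mu_{11}(x)$ first), whereas you iterate first and then reduce each summand; the assumptions invoked and the resulting expressions are identical. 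Your flagged concern about conditioning on $Y(1,1)=1$ while removing $M(1),M(0)$ is well taken and is indeed licensed by the \emph{joint} independence in Assumption~\ref{ass:xworld}.
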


The identifying expressions in Theorem (\ref{theorem1}) have intuitive interpretations. First, consider the expression for $\psi(x)$. Under our causal assumptions, this expression is equivalent to \footnote{This identity is implied by the Proof in the Appendix.}

\begin{align*}
&P(Y(1, 0) = 0 \mid Y(1, 1) = 1, M(1) = 1, X = x)P(M(0) = 0 \mid M(1) = 1, X = x).
\end{align*}

\noindent The first quantity in this expression represents the total mediated probability of causation with respect to a ``controlled indirect effect.'' In contrast to natural effects, which allow the mediator to assume its natural value under the relevant exposure condition, controlled (in)direct effects conceive of deterministically setting the exposure and mediator values. This functional therefore represents the probability that a negative outcome was induced by changing the mediator from $0$ to $1$ in the presence of an exposure among the stratum where $Y(1, 1) = 1$ and $M(1) = 1$. The second term is simply the probability of causation with respect to the mediator; that is, it is the probability of causation where we allow the mediator to play the role of the outcome. The identifying expression for $\psi(x)$ is therefore equivalent to the probability that the exposure induced the mediator times the probability that the inducing the mediator caused the outcome in the presence of the exposure (i.e. changing $Y(1, 0)$ to $Y(1, 1)$).

The identifying expression for $\delta(x)$ has a slightly more complex, though still intuitive, explanation. In this case, we can show that this expression is equivalent to,
\small
\begin{align*}
&P(Y(0, 0) = 0 \mid Y(1, 1) = 1, M(1) = 1, M(0) = 0, X = x)P(M(0) = 0 \mid M(1) = 1, X = x) \\
&+P(Y(0, 1) = 0 \mid Y(1, 1) = 1, M(1) = 1, M(0) = 1, X = x)P(M(0) = 1 \mid M(1) = 1, X = x).
\end{align*}
\normalsize

\noindent In other words, the total mediated probability of causation is equal to the probability that the exposure induced the mediator and either the mediator or the exposure induced the outcome; plus the probability that the exposure did not induce the mediator, but did induce the outcome in the presence of the mediator. These represent all possible pathways from which $A$ might affect $Y$ under the conditioning events. Finally, to arrive at the probability of indirect causation, we simply subtract from the expression for the probability of direct causation from the expression for the total mediated probability of causation. Intuitively, this removes the particular case where the exposure induced the mediator and the mediator induced the outcome, leaving only direct pathways from the exposure to the outcome behind; that is, cases where the exposure alone induced the outcome regardless of whether the exposure induced the mediator.

As a separate and perhaps useful point, the identification result for $\psi(x)$ implies that a valid upper bound on the probability of indirect causation is given by the probability of causation with respect to the mediator. This follows because in the most extreme case the probability of causation via a controlled indirect effect is one: in this case it remains only to show that the mediator was induced by the exposure. Conveniently, identification of this bound only requires assuming assumptions (\ref{eqn:pc1})-(\ref{eqn:pc5}) using $M$ and $M(a)$ in place of $Y$ and $Y(a)$. Thus, in practice, if the legal standard for some case depended on the probability of indirect causation being less than some threshold, it would suffice to show that the probability of causation with respect to the mediator is less than this same threshold. Because the probability of causation requires far weaker identifying assumptions than the probability of indirect causation, this bound may be useful for some real-world settings.

\subsection{Graphical intuition}
To illustrate the relationships between the causal estimands we have discussed -- the average treatment effect, the probability of causation, and the mediated probabilities of causation -- we simulate data and display the implied estimands.

 \begin{figure}
 \begin{center}
     \includegraphics[scale=0.4]{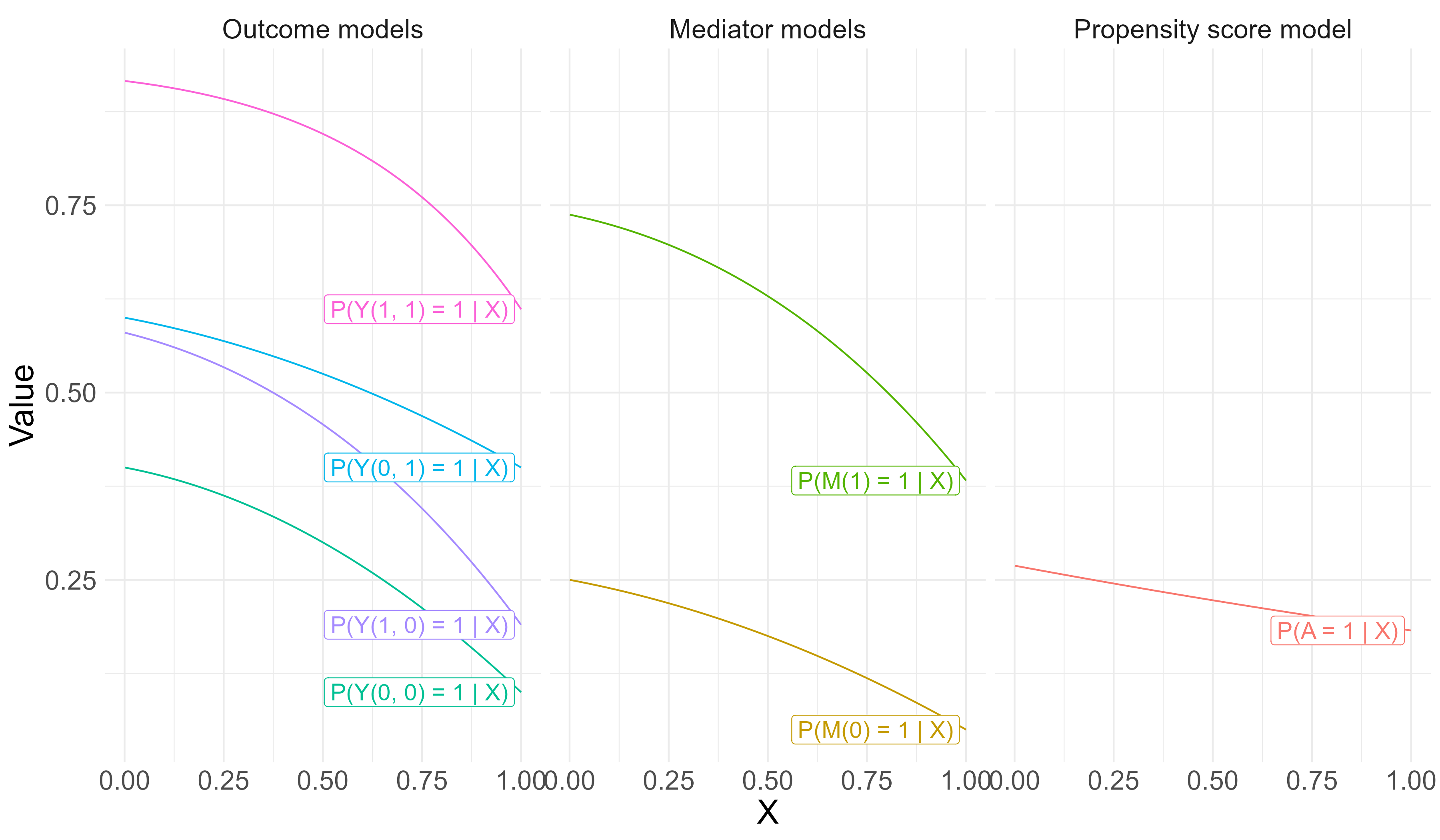}
 \end{center}
 \caption{Nuisance components}\label{fig:0}
 \end{figure}

We first draw a single continuous covariate $X \stackrel{iid}\sim \mathcal{U}(0, 1)$. We then generate propensity score, mediator, and outcome models as polynomial functions of this variable, as illustrated in Figure \ref{fig:0} (the precise specifications are available in Appendix \ref{app:specs}). 

The simulated functions above suggest that the probability of the outcome, mediator, and exposure are all decreasing in the covariate value. Moreover, for any fixed value of the covariate, the simulated functions give the exposed group a higher probability of the mediator than the corresponding non-exposed group ($P(M(1) = 1 \mid x) > P(M(0) = 1 \mid x)$ for all $x$). Moreover, for a fixed value of the mediator and covariate, the probability of the outcome for the exposed group is greater compared to the unexposed group  (i.e. $P(Y(1, m) = 1 \mid x) > P(Y(0, m) = 1 \mid x)$ for all $m, x$). Finally, we ensure that all monotonicity assumptions hold when making Bernoulli draws of the potential outcomes and mediators from these distributions.

 \begin{figure}
 \begin{center}
     \includegraphics[scale=0.4]{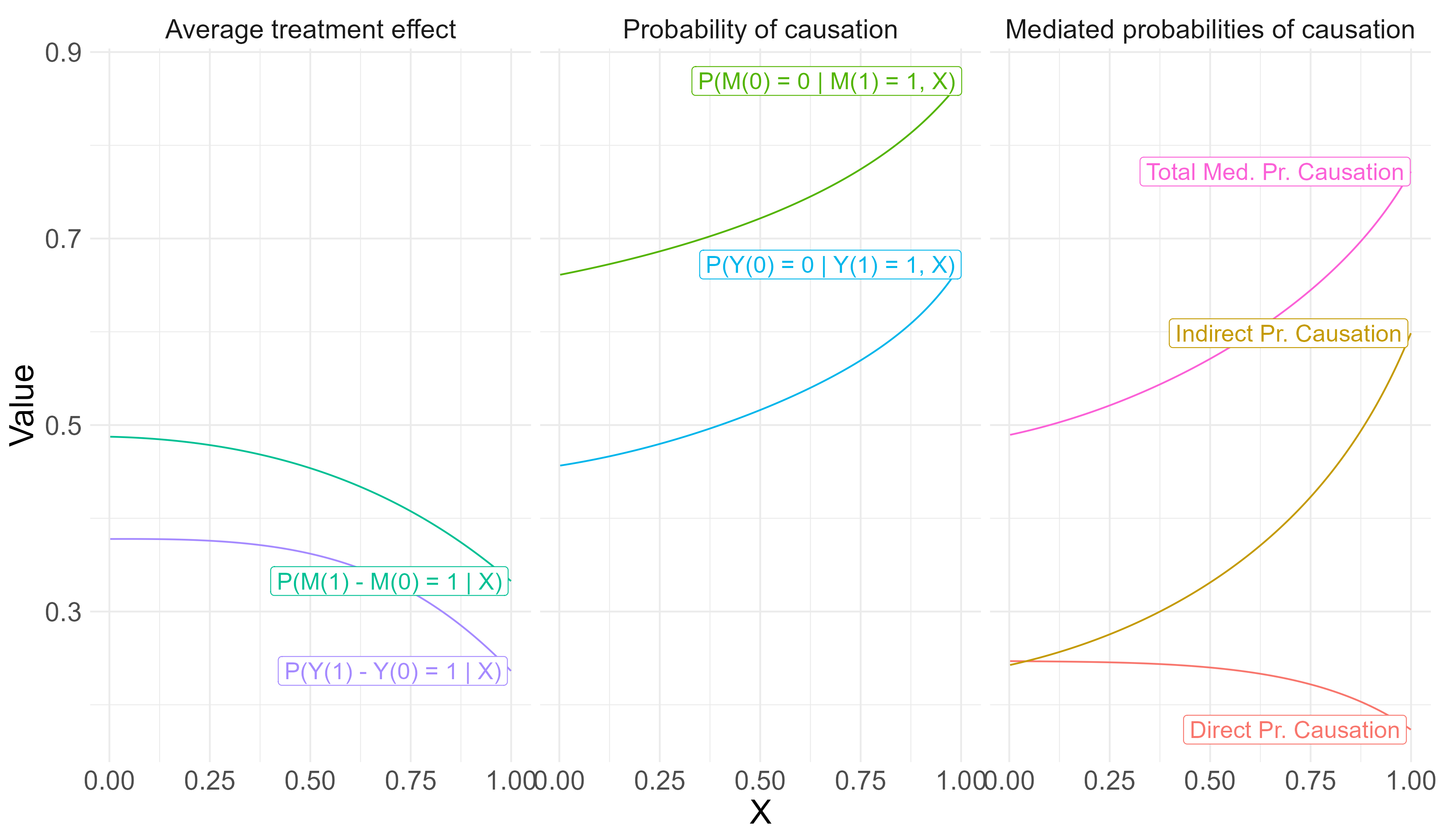}
 \end{center}
 \caption{Causal estimands}\label{fig:1}
 \end{figure}

Figure \ref{fig:1} displays the implied causal estimands in our simulated dataset. The first panel depicts the (conditional) average treatment effects. The plot shows that the effects on both the mediators and the outcome decrease as the covariate value increases. However, these functions do not answer the question: given that an individual would experience the negative outcome were they exposed, would they still experience the negative outcome were they not exposed? Or the question, given that an individual would experience a negative mediator were they exposed, would they not experience the negative mediator were they not exposed? These are the probabilities of causation with respect to the outcome and the mediator, respectively, and are displayed in the second panel. In contrast to the average treatment effects, the values of both quantities increase with the covariate value.

Finally, the third panel depicts the mediated probabilities of causation: these estimands condition on the subset of individuals who would experience the negative mediator and outcome were they exposed. The pink line represents the total mediated probability of causation, while the yellow and red lines represent the probabilities of indirect and direct causation, respectively. Throughout most of the covariate distribution, the probability of indirect causation is higher than the probability of direct causation, and for individuals with values of $x$ approximately greater than $0.885$, the probability of indirect causation is greater than 0.5. For these individuals, it is more likely than not that they would not experience the outcome absent exposure, due to the fact that they would not experience the negative mediating event absent exposure.

\section{Estimation}\label{sec:estimation}

We propose estimating projections of the mediated probabilities of causation onto parametric working models, following \cite{cuellar2020non}. In other words, we do not propose estimating the mediated probabilities of causation directly, but instead target a summary measure. Figure \ref{fig:2} illustrates this idea, displaying the true total mediated and probability of indirect of causation as a function of $x$, denoted in the darker lines, versus their best fitting linear approximation based on minimizing the squared error, denoted in the lighter shades of the same color. Notably, the projections are almost never exactly equal to the true quantities because the true functions are not linear.\footnote{In practice, we could define a more complex projection that better fits the true data generating process, though we do not do so to illustrate the difference between these quantities.} Nevertheless, these quantities are generally close, illustrating that projections may be useful targets of inference. 

 \begin{figure}
 \begin{center}
     \includegraphics[scale=0.4]{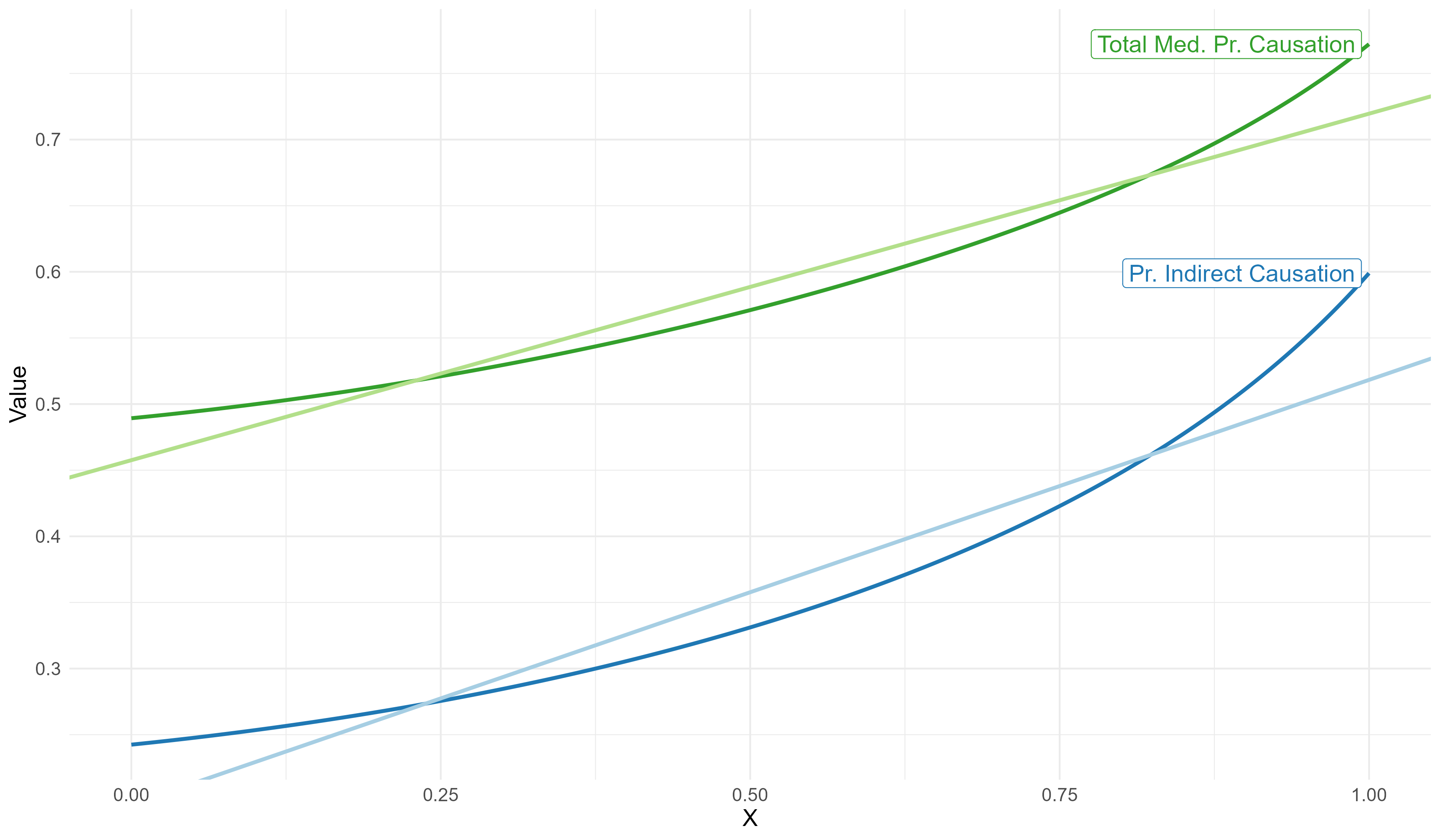}
 \end{center}
 \caption{Projections}\label{fig:2}
 \end{figure}

Targeting projections is also desirable from a statistical perspective because, when the covariates are continuous, these estimands cannot generally be estimated at root-n rates without making strong -- and unrealistic -- parametric modeling assumptions. By contrast, we can obtain root-n consistent and asymptotically normal estimates of projections of these parameters under relatively weak conditions. We briefly review this approach and elaborate on the properties of these estimators below, using a projection of $\psi(x)$ to illustrate. 

To be precise, we define a projection of $\psi(x)$ at a point $x$ as $g(x; \beta)$ for some function $g$ parameterized by some finite-dimensional parameter vector $\beta$, where $\beta$ is defined as the solution to equation (\ref{eqn:projection})\footnote{While we define the projection with respect to the $L_2$ loss, we could generalize this formula to minimize other loss functions instead.},

\begin{align}\label{eqn:projection}
    \beta = \arg\min_{\tilde{\beta}}\mathbb{E}[w(X)\{\psi(X) - g(X; \tilde{\beta})\}^2].
\end{align}

\noindent The solution to (\ref{eqn:projection}) depends on pre-specified weights $w(X)$ that differentially weight the covariate space. In the simplest case, these weights may be uniform, so that all portions of the covariate space contribute equally to the projection; however, other choices of weights are possible. Moreover, we need not assume $\psi(x) = g(x; \beta)$ for any $x$ for the projection defined by $g(x; \beta)$ to be a valid and meaningful target of statistical inference. This fact generally holds when using parametric models to estimate unknown and possibly complex data generating mechanisms. Indeed, one can often interpret model parameters as defining a projection of some true process under no modeling assumptions \cite{angrist2009mostly, buja2019models}.\footnote{In some cases we may wish to define a projection with respect to some subset of the covariates, for example, when the covariates are high-dimensional. In this case we can simply define the function $g$ as a function of this subset. The estimation strategy outlined below proceeds analogously.}

Notice, however, that in cases where the projections of all mediated probabilities of causation are desired, no guarantee exists that the projections of the probabilities of direct and indirect causation sum to the projection of the total mediated probability of causation, except in special cases (for example, when all models are linear in the parameters of identical covariate transformations and an identical weighting scheme is used across all models). If this property is desirable, one can simply estimate two of the three projections and take the relevant sum or difference to obtain the third. While this final quantity is not strictly speaking the projection, it is still a well-defined functional that may be of interest.\footnote{That said, if two of the projections sum (or difference) to something far from the projection of the third, this may suggest that one or more of the projections is badly misspecified for the true function.}

Regardless, to estimate a projection, we follow \cite{cuellar2020non} and propose a ``doubly-robust'' estimation approach. Specifically, we propose using the estimator $\hat{\beta}$ that satisfies the moment condition,

\begin{align}\label{eqn:estimator}
\mathbb{P}_n\left[\frac{\partial g(X; \hat{\beta})}{\partial \beta} w(X)\left(\varphi(O; \hat{\eta})  - g(X; \hat{\beta})\right)\right] = 0,
\end{align}

\noindent where $\varphi(O; \hat{\eta})$ uses estimated ``nuisance'' functions $\hat{\eta}$ of $\eta = [\mu_{11}, \mu_{10}, P(A, M \mid X)]$\footnote{Note that we can recover estimates of $(P(A = 1 \mid X), \gamma_1(X), \gamma_0(X))$ from estimates of $P(A, M \mid X)$. Alternatively, we could separately estimate these functions to obtain estimates of $P(A, M \mid X)$.} plugged into the following expression,

\begin{align*}
\varphi(O; \eta) = \varphi_3(O; \eta) - \varphi_1(O; \eta) - \varphi_2(O; \eta) + \psi(X; \eta),
\end{align*}

\noindent where

\begin{align*}
\varphi_1(O; \eta) &= \frac{1}{\mu_{11}(X)}\left[\frac{A(1-M)\{Y - \mu_{10}(X)\}}{P(A = 1, M = 0 \mid X)} - \frac{\mu_{10}(X)}{\mu_{11}(X)}\frac{AM\{Y - \mu_{11}(X)\}}{P(A = 1, M = 1 \mid X)}\right], \\
\varphi_2(O; \eta) &= \frac{1}{\gamma_1(X)}\left[\frac{(1-A)(M - \gamma_0(X))}{P(A = 0 \mid X)} - \frac{\gamma_0(X)}{\gamma_1(X)}\frac{A(M - \gamma_1(X))}{P(A = 1 \mid X)}\right],\\
\varphi_3(O; \eta) &= \varphi_1(O; \eta)\left[\frac{\gamma_0(X)}{\gamma_1(X)}\right] + \varphi_2(O; \eta)\left[\frac{\mu_{10}(X)}{\mu_{11}(X)}\right].
\end{align*}

Theorem \ref{theorem3} in Appendix \ref{app:othest} formalizes conditions where this estimator will yield root-n consistent and asymptotically normal estimates of $\beta$ and therefore $g(x; \beta)$. 

A key benefit of this ``doubly-robust'' estimation strategy versus other possible methods is that it allows for non-parametric estimation of the nuisance functions while still obtaining parametric rates of convergence of the estimated projection parameter under relatively mild conditions. At a high level, this is possible because the error of doubly-robust estimators is a function of the product of errors among the estimated nuisance functions. As a result, the overall error of the estimator converges to zero more quickly than each nuisance component. As long as certain error products between nuisance components are estimated quickly enough (to be precise, are $o_p(n^{-1/2})$ in the $L_2$ norm), then the nuisance estimation can be ignored asymptotically. In other words, the asymptotics of these estimators is as if we regressed $\varphi(O; \eta)$ onto $g(X; \beta)$ directly, instead of the estimated function $\varphi(O; \hat{\eta})$.

We refer to \cite{cuellar2020non} for a more thorough discussion of using a non-parametric projection based estimation methods and \cite{kennedy2021semiparametric} for a more thorough review of doubly-robust estimators. We also provide the analogous doubly-robust estimators for the projections of $\delta(x)$ and $\zeta(x)$ in Appendix \ref{app:othest}.

\section{Simulations}\label{sec:simulations}

We conduct a simulation study to verify the expected theoretic performance of our proposed estimators. In our first simulation study we examine the bias, RMSE, and coverage rates for the proposed projection parameter while simulating nuisance estimation error on the true data generating functions. In particular, we draw 1,000 samples of size $n = 1,000$ and add gaussian noise to each nuisance parameter $\eta_j$ in the following way:

\begin{align*}
    \hat{\eta}_j(x) = \text{expit}(\text{logit}(\eta_j(x)) + \mathcal{N}(C_{1j}n^{-\alpha}, C_{2j}n^{-2\alpha}))
\end{align*}

\noindent for constants $C_1$ and $C_2$ that depend on the nuisance function and $\alpha$ that defines the convergence rate. In other words, we simulate nuisance parameters estimated at $\mathcal{O}_p(n^{-\alpha})$ rates in the $L_2$ norm. We test $\alpha = 0.3, 0.1$, where the first parameterization is fast enough to satisfy the rate conditions of Theorem \ref{theorem3}, but the second does not.\footnote{This theorem states that a sufficient condition for root-n consistency and asymptotic normality of this parameter estimate is that each nuisance function be estimated at $o_p(n^{-1/4})$ rates.} We then estimate a linear projection of each mediated probability of causation (total, indirect, and direct) at the value of $x = 0.75$, as illustrated in Figure \ref{fig:1} above, compared to the ``true'' projection parameter, which we estimate by simulating a population of ten million and estimating the projection using the same model. Table \ref{tab:1} displays the results of this first simulation, illustrating minimal bias for all estimands and showing that we are able to obtain approximately nominal coverage rates for the projection parameters when $\alpha = 0.3$, but not when $\alpha = 0.1$, as expected. Code to implement these simulations is available at: https://github.com/mrubinst757/pcmediation.

\begin{table}
\caption{Simulation results}\label{tab:1}
\centering
\begin{tabular}[t]{cccccccc}
\toprule
\multicolumn{2}{c}{ } & \multicolumn{3}{c}{$\alpha = 0.3$} & \multicolumn{3}{c}{$\alpha = 0.1$} \\
\cmidrule(l{3pt}r{3pt}){3-5} \cmidrule(l{3pt}r{3pt}){6-8}
Estimand & Truth & Bias & RMSE & Coverage & Bias & RMSE & Coverage\\
\midrule
Indirect & 0.44 & -0.01 & 0.05 & 0.96 & 0.23 & 0.30 & 0.80\\
Direct & 0.22 & 0.02 & 0.05 & 0.94 & -0.13 & 0.23 & 0.92\\
Total & 0.65 & 0.01 & 0.04 & 0.94 & 0.10 & 0.15 & 0.89\\
\bottomrule
\end{tabular}
\end{table}

\section{Conclusion}\label{sec:discussion}

We have introduced a class of causal estimands that merge ideas from the literature on mediation and the probability of causation that we call the ``mediated probabilities of causation.'' We highlighted three quantities in particular: the total mediated probability of causation, and the probabilities of direct and indirect causation. We then provided a set of assumptions sufficient to identify these estimands in observed data. While these assumptions are quite strong, we also show that a far weaker set of assumptions suffices to identify upper bounds on the probability of indirect causation. Finally, we propose estimating these quantities using a doubly-robust projection based approach, following \cite{cuellar2020non}. In a set of simulation studies we verify that these estimators can obtain root-n consistent and asymptotically normal estimates of the projection parameters while allowing for relatively slow rates of convergence on the nuisance estimates. 

Although we do not apply our methodology to any specific empirical example, we hope that it is useful to formally explicate these quantities. For example, the formalism that we bring to these quantities may aid legal discussions, including the Harvard admissions case, and clarify the strong assumptions required to identify and estimate these parameters in observed data.\footnote{https://www.supremecourt.gov/opinions/22pdf/20-1199_hgdj.pdf} Moreover, this formalism may also be useful in medical contexts. For example, if medical providers or other stakeholders hypothesize that a vaccinated patient's negative outcome was brought about by a negative mediating event -- for example, an adverse reaction to vaccination -- then the probability of indirect causation may be of substantive interest.

This paper also has technical limitations. For example, we only consider the single mediator setting, and restrict that the mediators, exposure, and outcome are all binary. Applied settings frequently have multiple possibly continuous mediators with unknown causal orderings. Formalizing analogous estimands in these more complicated -- and realistic -- settings would be a useful area for future research. Additionally, our identification results require strong unverifiable assumptions, including cross-world ignorability. Whether it is possible to weaken these assumptions, or bound these estimands under some assumed sensitivity parameters, would be interesting to investigate. Finally, our proposed estimation method targets a projection of the causal estimands, rather than the causal estimands themselves. We could instead use DR-learner \cite{kennedy2022} to estimate the true mediated probabilities of causation.\footnote{At a high-level, this simply involves replacing the parametric projection with a fully non-parametric model and regressing the estimated quantity $\varphi(O; \hat{\eta})$ onto this quantity.} While this approach will not generally yield root-n consistent estimates, a full articulation of this method applied to this setting may be worthwhile. \\ [2ex]

\noindent\textbf{Acknowledgments:} The authors would like to thank two anonymous reviewers and the Associate Editor for helpful comments, questions, and suggestions that improved the quality of this manuscript. \hfill \break

\noindent\textbf{Funding information:} DM was partially supported by the National Institutes of Health under award number K25ES034064 from NIEHS. \hfill \break

\noindent\textbf{Conflict of interest:} Authors state no conflicts of interest. \hfill \break

\noindent\textbf{Data availability statement:} Code to replicate the simulation results is available at: \url{https://github.com/mrubinst757/pcmediation}

\bibliographystyle{vancouver} 
\bibliography{research.bib}       

\newpage 

\appendix

\section{Additional theoretic results}\label{app:othest}

Proposition \ref{proposition2} formalizes the proposed influence function-based estimator discussed in Section \ref{sec:estimation}. 

\begin{proposition}\label{proposition2}

Consider the moment condition for a fixed $\beta^\star$ suggested by equation (\ref{eqn:projection}):

\begin{align*}
    \Psi(\beta^\star) = \mathbb{E}[w(X)(\psi(X) - g(X; \beta^\star))] = 0
\end{align*}

\noindent where $\psi(x)$ is given by equation (\ref{eqn:ident1}). Under a non-parametric model, the uncentered efficient influence curve for the moment condition $\Psi(\beta^\star)$ at any fixed $\beta^\star$ is given by

\begin{align}
\phi(Z; \beta^\star, \eta) &= \frac{\partial g(X; \beta^\star)}{\partial \beta}w(X)\left(\varphi(O; \eta)  - g(X; \beta^\star)\right)
\end{align}

\noindent where 

\begin{align*}
\varphi(O; \eta) = \varphi_3(O; \eta) - \varphi_1(O; \eta) - \varphi_2(O; \eta) + \psi(X; \eta) 
\end{align*}

\noindent and

\begin{align*}
\varphi_1(O; \eta) &= \frac{1}{\mu_{11}(X)}\left[\frac{A(1-M)\{Y - \mu_{10}(X)\}}{P(A = 1, M = 0 \mid X)} - \frac{\mu_{10}(X)}{\mu_{11}(X)}\frac{AM\{Y - \mu_{11}(X)\}}{P(A = 1, M = 1 \mid X)}\right] \\
\varphi_2(O; \eta) &= \frac{1}{\gamma_1(X)}\left[\frac{(1-A)(M - \gamma_0(X))}{P(A = 0 \mid X)} - \frac{\gamma_0(X)}{\gamma_1(X)}\frac{A(M - \gamma_1(X))}{P(A = 1 \mid X)}\right]\\
\varphi_3(O; \eta) &= \varphi_1(O; \eta)\left[\frac{\gamma_0(X)}{\gamma_1(X)}\right] + \varphi_2(O; \eta)\left[\frac{\mu_{10}(X)}{\mu_{11}(X)}\right].
\end{align*}

\end{proposition}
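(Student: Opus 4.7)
The plan is to derive the uncentered efficient influence function (EIF) by pathwise differentiation along a regular parametric submodel $P_t$ with score $s(O)$, exploiting the product structure $\psi(X) = T_1(X) T_2(X)$ from Theorem \ref{theorem1}, where $T_1(X) = 1 - \mu_{10}(X)/\mu_{11}(X)$ and $T_2(X) = 1 - \gamma_0(X)/\gamma_1(X)$. The key observation is that, for fixed $\beta^\star$, both $g(X;\beta^\star)$, $\partial g(X;\beta^\star)/\partial\beta$, and $w(X)$ depend on $P$ only through the marginal law of $X$, so all of the substantive EIF content is routed through the pathwise variation of $\psi(X)$.

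First, for each conditional mean use the standard submodel identity $\partial_t \mu_{am}(x)|_0 = \mathbb{E}[\mathbf{1}\{A=a, M=m\}(Y-\mu_{am}(X))/P(A=a,M=m\mid X) \cdot s(O) \mid X=x]$, and the analogous identity for $\gamma_a(x)$ with $M$ replacing $Y$. Applying the quotient rule to $\mu_{10}/\mu_{11}$ and $\gamma_0/\gamma_1$ and factoring yields the compact identities $\partial_t T_1(x)|_0 = -\mathbb{E}[\varphi_1(O;\eta) s(O) \mid X=x]$ and $\partial_t T_2(x)|_0 = -\mathbb{E}[\varphi_2(O;\eta) s(O) \mid X=x]$. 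In other words, the specific expressions for $\varphi_1$ and $\varphi_2$ in the proposition statement are exactly what emerge as the (negated) conditional EIFs of the two ratios.

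Second, apply the product rule in $t$ to obtain $\partial_t \psi(x)|_0 = -T_2(x) \mathbb{E}[\varphi_1 s \mid X=x] - T_1(x) \mathbb{E}[\varphi_2 s \mid X=x]$. Taking the outer expectation against $h(X) = (\partial g(X;\beta^\star)/\partial \beta) w(X)$ (componentwise) and pushing it past the inner conditional expectation gives $\mathbb{E}[h(X) \partial_t \psi(X)] = \mathbb{E}[-h(X)(T_2 \varphi_1 + T_1 \varphi_2) s(O)]$; adding the marginal-of-$X$ contribution $\mathbb{E}[h(X)(\psi(X)-g(X;\beta^\star)) s(O)]$ and using $\mathbb{E}[s] = 0$ produces the centered EIF $h(X)[\psi(X) - g(X;\beta^\star) - T_2 \varphi_1 - T_1 \varphi_2]$ (since the $\varphi_i$ have zero conditional mean given $X$, so the centering is automatic).

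Finally, verify the algebraic identity $-T_2 \varphi_1 - T_1 \varphi_2 = \varphi_3 - \varphi_1 - \varphi_2$ by direct expansion: substituting $T_1 = 1 - \mu_{10}/\mu_{11}$ and $T_2 = 1 - \gamma_0/\gamma_1$, the cross-terms $(\gamma_0/\gamma_1)\varphi_1$ and $(\mu_{10}/\mu_{11})\varphi_2$ reassemble into $\varphi_3$ by its definition. This converts the centered EIF into $h(X)[\varphi(O;\eta) - g(X;\beta^\star)]$, with $\varphi(O;\eta) = \psi(X;\eta) + \varphi_3 - \varphi_1 - \varphi_2$, matching the claim. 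The main obstacle is careful bookkeeping through the chain rule and correctly tracking signs in the quotient-rule step (especially given the nested ratio structure); efficiency in the nonparametric model then follows immediately from the fact that the candidate EIF lies in the unrestricted tangent space $L_2^0(P)$, which is guaranteed by $\mathbb{E}[\varphi_1 \mid X] = \mathbb{E}[\varphi_2 \mid X] = 0$ (and hence $\mathbb{E}[\varphi_3 \mid X] = 0$).
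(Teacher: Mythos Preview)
Your proposal is correct and follows essentially the same logical route as the paper's proof: both exploit the product form $\psi(x)=T_1(x)T_2(x)$, apply the quotient rule to each ratio to obtain $\varphi_1$ and $\varphi_2$, combine via the product rule, and then regroup to recover $\varphi_3-\varphi_1-\varphi_2$. The only difference is in presentation: the paper computes the EIF using the ``treat $X$ as discrete'' influence-function calculus of \cite{kennedy2022eifs} (writing $\text{EIF}[\mu_{am}(x)]=\mathbf{1}(A=a,M=m,X=x)\{Y-\mu_{am}(x)\}/P(A=a,M=m,X=x)$ and applying chain/product/quotient rules symbolically), whereas you carry out the equivalent computation via formal pathwise differentiation along a submodel with score $s(O)$. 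Your framing is arguably more self-contained (it does not lean on a heuristic whose validity must be justified separately), while the paper's shortcut is faster once that heuristic is accepted; the underlying algebra---including your final identity $-T_2\varphi_1-T_1\varphi_2=\varphi_3-\varphi_1-\varphi_2$---is identical in both.
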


Proposition \ref{proposition3} provides an analogous estimator for the projection of $\delta(x)$. 

\begin{proposition}\label{proposition3}

Consider the moment condition for a fixed $\alpha^\star$ suggested by equation (\ref{eqn:projection}), where we assume the user-specified projection of $\delta(x)$ (see equation (\ref{ident2}), is given by the function $h(x; \alpha^\star)$:

\begin{align*}
    \Psi_{tpc}(\alpha^\star) = \mathbb{E}[w(X)(\delta(X) - h(X; \alpha^\star))] = 0
\end{align*}

\noindent Under a non-parametric model, the uncentered efficient influence curve for the moment condition $\Psi_{tpc}(\alpha^\star)$ at any fixed $\alpha^\star$ is given by

\begin{align}
\phi_{tpc}(O; \alpha^\star, \eta) &= \frac{\partial h(X; \alpha^\star)}{\partial \alpha}w(X)\left(\varphi_{tpc}(O; \eta)  - h(X; \alpha^\star)\right)
\end{align}

\noindent where 

\begin{align*}
\varphi_{tpc}(O; \eta) = \varphi_{tpc, 4}(O; \eta) - \varphi_{tpc, 5}(O; \eta) - \varphi_{tpc, 1}(O; \eta) + \delta(X; \eta) 
\end{align*}

\noindent and

\begin{align*}
\varphi_{tpc, 1}(O; \eta) &= \frac{1}{\mu_{11}(X)}\left[\frac{(1-A)(1-M)\{Y - \mu_{00}(X)\}}{P(A = 0, M = 0 \mid X)} - \frac{\mu_{00}(X)}{\mu_{11}(X)}\frac{AM\{Y - \mu_{11}(X)\}}{P(A = 1, M = 1 \mid X)}\right] \\
\varphi_{tpc, 2}(O; \eta) &= \frac{1}{\mu_{11}(X)}\left[\frac{(1-A)M\{Y - \mu_{01}(X)\}}{P(A = 0, M = 1 \mid X)} - \frac{\mu_{01}(X)}{\mu_{11}(X)}\frac{AM\{Y - \mu_{11}(X)\}}{P(A = 1, M = 1 \mid X)}\right] \\
\varphi_{tpc, 3}(O; \eta) &= \frac{1}{\gamma_1(X)}\left[\frac{(1-A)(M - \gamma_0(X))}{P(A = 0 \mid X)} - \frac{\gamma_0(X)}{\gamma_1(X)}\frac{A(M - \gamma_1(X))}{P(A = 1 \mid X)}\right] \\
\varphi_{tpc, 4}(O; \eta) &= \varphi_{tpc, 1}(O; \eta)\left[\frac{\gamma_0(X)}{\gamma_1(X)}\right] + \varphi_{tpc, 3}(O; \eta)\left[\frac{\mu_{00}(X)}{\mu_{11}(X)}\right] \\
\varphi_{tpc, 5}(O; \eta) &= \varphi_{tpc, 3}(O; \eta)\left[\frac{\gamma_0(X)}{\gamma_1(X)}\right] + \varphi_{tpc, 3}(O; \eta)\left[\frac{\mu_{01}(X)}{\mu_{11}(X)}\right]
\end{align*}

\noindent where $\eta = [\mu_{11}, \mu_{00}, \mu_{01}, P(A, M \mid X)]$.
\end{proposition}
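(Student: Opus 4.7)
The plan is to derive the uncentered efficient influence function (EIF) of $\Psi_{tpc}(\alpha^\star)$ by following exactly the same template that yields Proposition \ref{proposition2}, now applied to the identifying formula for $\delta(x)$ rather than $\psi(x)$. Since $h(X;\alpha^\star)$ is a fixed parametric function of $X$ and $\alpha^\star$, and the weight $w(X)$ depends only on $X$, the standard EIF calculation for a projection-type moment of the form $\mathbb{E}\bigl[\tfrac{\partial h}{\partial \alpha} w(X)(\delta(X) - h(X;\alpha^\star))\bigr]$ reduces to computing the uncentered EIF of the conditional functional $\delta(X)$ and then substituting through. All the novel work is therefore in the EIF of $\delta(X)$.

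The first step is to use Theorem \ref{theorem1} to write $\delta(x) = (1-r_1(x))(1-s(x)) + (1-r_2(x))s(x) = 1 - r_1 + r_1 s - r_2 s$, where $r_1 := \mu_{00}/\mu_{11}$, $r_2 := \mu_{01}/\mu_{11}$, and $s := \gamma_0/\gamma_1$. Next I would compute the centered EIF of each of these three ratios via the quotient rule applied to the pointwise EIFs of the underlying conditional means: $\mathbb{1}(A=a,M=m)(Y-\mu_{am})/P(A=a,M=m\mid X)$ for $\mu_{am}$ and $\mathbb{1}(A=a)(M-\gamma_a)/P(A=a\mid X)$ for $\gamma_a$. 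These delta-method calculations produce $\varphi_{tpc,1}$, $\varphi_{tpc,2}$, $\varphi_{tpc,3}$ as the centered EIFs of $r_1$, $r_2$, $s$ respectively; note that the derivation of $\varphi_{tpc,3}$ is identical to that of $\varphi_2$ in Proposition \ref{proposition2}.

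The product rule for EIFs then gives the centered EIFs of the bilinear terms: $\dot{(r_1 s)} = s\dot{r_1} + r_1\dot{s}$ and $\dot{(r_2 s)} = s\dot{r_2} + r_2\dot{s}$. These are encoded as $\varphi_{tpc,4}$ and $\varphi_{tpc,5}$ in the statement (with the second occurrence of $\varphi_{tpc,3}$ in $\varphi_{tpc,5}$ understood as $\varphi_{tpc,2}$, the form dictated by the product rule). Linearly combining according to $\delta = 1 - r_1 + r_1 s - r_2 s$ then yields $\dot\delta = -\varphi_{tpc,1} + \varphi_{tpc,4} - \varphi_{tpc,5}$, and adding $\delta(X;\eta)$ gives the uncentered form $\varphi_{tpc}(O;\eta) = \varphi_{tpc,4} - \varphi_{tpc,5} - \varphi_{tpc,1} + \delta(X;\eta)$. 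Plugging this into the Proposition \ref{proposition2} projection template, i.e.\ multiplying $\varphi_{tpc}(O;\eta) - h(X;\alpha^\star)$ by $w(X)\,\partial h(X;\alpha^\star)/\partial\alpha$, yields the stated $\phi_{tpc}$.

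The main obstacle is bookkeeping in the product-rule expansion: keeping centered versus uncentered quantities straight, correctly tracking signs when $\delta$ is expanded as $1 - r_1 - s r_2 + s r_1$, and verifying that the cross-terms telescope to leave precisely $\varphi_{tpc,4} - \varphi_{tpc,5} - \varphi_{tpc,1} + \delta$. Standard regularity (bounded nuisance ratios, guaranteed by Assumptions \ref{ass:positivity}--\ref{ass:ypositivity}) is needed throughout so that the pathwise derivatives underlying each delta-method and product-rule step are well-defined; these conditions are inherited from the hypotheses already invoked in Proposition \ref{proposition2}.
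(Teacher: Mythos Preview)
Your proposal is correct and follows essentially the same route as the paper: both expand $\delta(x)$ as $1 - r_1 + r_1 s - r_2 s$ with $r_1=\mu_{00}/\mu_{11}$, $r_2=\mu_{01}/\mu_{11}$, $s=\gamma_0/\gamma_1$, compute the EIFs of the ratios via the quotient rule and of the bilinear terms via the product rule, and then substitute the resulting $\varphi_{tpc}$ into the projection moment-condition template from Proposition~\ref{proposition2}. Your observation that the first $\varphi_{tpc,3}$ in the displayed definition of $\varphi_{tpc,5}$ should read $\varphi_{tpc,2}$ is also correct and matches what the product-rule derivation in the paper actually yields.
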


\begin{remark}
    To obtain an analogous result for the quantity $\zeta(x)$, one simply need note that $\zeta(x) = \delta(x) - \psi(x)$: since the influence function of the difference of two functionals is the difference of the influence functions, the result immediately follows that the influence function for the corresponding moment condition is given by:

    \begin{align}
        \phi_{pnde}(O; \rho^\star, \eta) &= \frac{\partial m(X; \rho^\star)}{\partial \rho}w(X)\left(\varphi_{tpc}(O; \eta) - \varphi(O; \eta)  - m(X; \rho^\star)\right)
    \end{align}
    \noindent where $m(x; \rho)$ defines the user-specified projection of $\zeta(x)$.
\end{remark}

Theorem \ref{theorem3} establishes the conditions where the estimator proposed in equation (\ref{eqn:estimator}) is root-n consistent for $\psi$ and asymptotically normal.

\begin{theorem}\label{theorem3}
Consider the moment condition $\mathbb{E}[\phi(O; \beta_0, \eta_0)] = 0$ evaluated at the true parameters $(\beta_0, \eta_0)$. Now consider the estimator $\hat{\beta}$ that satisfies $\mathbb{P}_n[\phi(O; \hat{\beta}, \hat{\eta})] = 0$, where $\hat{\eta}$ is estimated on an independent sample. Assume that:
\begin{itemize}
    \item The function class $\{\phi(O; \beta, \eta): \beta \in \mathbb{R}^p\}$ is Donsker in $\beta$ for any fixed $\eta$
    \item $\|\phi(O; \hat{\beta}, \hat{\eta}) - \phi(O; \beta_0, \eta_0)\| = o_p(1)$
    \item The map $\beta \to \mathbb{P}[\phi(O; \beta, \eta)]$ is differentiable at $\beta_0$ uniformly in $\eta$, with non-singular derivative matrix $\frac{\partial}{\partial \beta}\mathbb{P}\{\phi(O; \beta, \eta)\}\mid_{\beta = \beta_0} = M(\beta_0, \eta)$, where $$M(\beta_0, \hat{\eta}) \overset{p}{\to} M(\beta_0, \eta_0)$$
    \item $\|\hat{\eta} - \eta\| = o_p(n^{-1/4})$
\end{itemize}

\noindent Then the proposed estimator attains the non-parametric efficiency bound and is asymptotically normal with

\begin{align}
    \sqrt{n}(\hat{\beta} - \beta_0) \to^d \mathcal{N}(0, M^{-1}\mathbb{E}[\phi\phi^\top]M^{-\top}) 
\end{align}

\noindent and, therefore, for any fixed value of $X = x$,

\begin{align}\label{eqn:avarproj}
    \sqrt{n}(g(x; \hat{\beta}) - g(x; \beta_0)) \to^d \mathcal{N}\left(0, \left(\frac{\partial g(x; \beta_0)}{\partial \beta}\right)^\top M^{-1}\mathbb{E}[\phi\phi^\top]M^{-\top}\frac{\partial g(x; \beta_0)}{\partial \beta}\right).
\end{align}
\end{theorem}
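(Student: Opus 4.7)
The plan is to treat $\hat{\beta}$ as a cross-fit Z-estimator solving $\mathbb{P}_n[\phi(O; \hat{\beta}, \hat{\eta})] = 0$ and follow the standard influence-function expansion for doubly-robust M-estimators (as in \cite{kennedy2021semiparametric}). Adding and subtracting $\phi(O; \beta_0, \eta_0)$ inside the empirical moment yields the starting identity
\begin{align*}
0 &= \sqrt{n}\,\mathbb{P}_n[\phi(O; \beta_0, \eta_0)] + \sqrt{n}\,(\mathbb{P}_n - \mathbb{P})[\phi(O; \hat{\beta}, \hat{\eta}) - \phi(O; \beta_0, \eta_0)] \\
&\quad + \sqrt{n}\,\mathbb{P}[\phi(O; \hat{\beta}, \hat{\eta}) - \phi(O; \beta_0, \eta_0)].
\end{align*}
The first term converges in distribution to $\mathcal{N}(0, \mathbb{E}[\phi\phi^\top])$ by the CLT. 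For the empirical-process (second) term I would combine the Donsker hypothesis in $\beta$ with the $L_2$-consistency hypothesis to invoke a standard asymptotic-equicontinuity argument; since $\hat{\eta}$ is fit on an independent subsample, one can condition on the nuisance sample and treat the relevant function class as indexed solely by $\beta$, avoiding any Donsker requirement on $\eta$ itself.

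Next I would split the remaining drift term as $\mathbb{P}[\phi(O; \hat{\beta}, \hat{\eta}) - \phi(O; \beta_0, \hat{\eta})] + \mathbb{P}[\phi(O; \beta_0, \hat{\eta}) - \phi(O; \beta_0, \eta_0)]$. The uniform-differentiability hypothesis reduces the first piece to $M(\beta_0, \hat{\eta})(\hat{\beta} - \beta_0) + o_p(\|\hat{\beta} - \beta_0\|)$, and the continuity conclusion $M(\beta_0, \hat{\eta}) \to^p M := M(\beta_0, \eta_0)$ lets us substitute in the limiting $M$.

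The hard part will be showing the second (von Mises) summand is $o_p(n^{-1/2})$, which is where the doubly-robust structure of $\phi$ is essential. By direct manipulation of the explicit formulas for $\varphi_1, \varphi_2, \varphi_3$ given in Proposition \ref{proposition2} and repeated use of iterated expectations conditional on $X$, the bias $\mathbb{P}[\varphi(O; \hat{\eta}) - \varphi(O; \eta_0)]$ should telescope into a finite sum of ``second-order'' terms of the form $\mathbb{E}[w(X)(\hat{\eta}_j - \eta_j)(\hat{\eta}_k - \eta_k)]$ for pairs of nuisance components $(\eta_j, \eta_k)$, with weight functions $w$ bounded under the positivity assumptions \ref{ass:positivity} and \ref{ass:ypositivity}. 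Cauchy-Schwarz then controls each product by $\|\hat{\eta}_j - \eta_j\|\,\|\hat{\eta}_k - \eta_k\|$, which is $o_p(n^{-1/2})$ under the $o_p(n^{-1/4})$ nuisance-rate hypothesis. Verifying this cancellation term-by-term, and in particular confirming that no ``linear'' error term survives the telescoping, is the main technical labor and is what justifies the ``doubly-robust'' label for $\varphi$.

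Assembling the pieces yields $\sqrt{n}(\hat{\beta} - \beta_0) = -M^{-1}\sqrt{n}\,\mathbb{P}_n[\phi(O; \beta_0, \eta_0)] + o_p(1)$, and Slutsky delivers the stated Gaussian limit; non-parametric efficiency is immediate because $\phi$ is, by construction, the efficient influence function of the moment functional $\Psi(\beta_0)$ (Proposition \ref{proposition2}). Equation (\ref{eqn:avarproj}) then follows as a one-line delta-method corollary applied to the smooth map $\beta \mapsto g(x; \beta)$ at fixed $x$, with gradient $\partial g(x; \beta_0)/\partial \beta$.
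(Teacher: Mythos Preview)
Your proposal is correct and follows essentially the same route as the paper. The paper's own proof is only a two-sentence sketch that invokes Lemma~3 of \cite{kennedy2021semiparametric} and singles out, exactly as you do, the von Mises remainder $P[\phi(O;\beta_0,\hat{\eta})-\phi(O;\beta_0,\eta_0)]$ as the piece that must be shown second-order so that the $o_p(n^{-1/4})$ nuisance rate suffices; your outline simply spells out the surrounding empirical-process and delta-method steps in more detail than the paper does.
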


\begin{remark}
Theorem \ref{theorem3} assumes that $\|\hat{\eta} - \eta\| = o_p(n^{-1/4})$. This condition is sufficient, but not necessary, for this result to hold. The key requirement is that the error of the influence function based estimator is a function of the product of errors in the nuisance estimates. As long as these error products are all $o_p(n^{-1/2})$, this result will hold.
\end{remark}

\begin{remark}
Analogous results to Theorem \ref{theorem3} for the projections of $\delta(x)$ and $\zeta(x)$ can easily be derived: the difference comes in the precise form of the second-order remainder terms, which we have omitted here, that are implicitly assumed $o_p(n^{-1/2})$ under the assumption that $\|\hat{\eta} - \eta\| = o_p(n^{-1/4})$. Additionally, further notice that we have slightly abused notation to let $\eta$ to represent any of the nuisance components in the corresponding influence functions; however, the specific components differ slightly depending on the estimand (e.g. $\delta(x)$ depends on $\mu_{01}(x)$ while $\psi(x)$ does not).
\end{remark}

\newpage

\section{Proofs}\label{app:proofs}

We divide the proofs into two subsections: first, proofs concerning identification. Second, proofs concerning estimation. For the proofs on identification, we first outline slightly weaker sets of assumptions (compared to those outlined in Section \ref{sec:identification}) that are sufficient to identify each estimand.

\subsection{Identification}\label{app:identification}

\subsubsection{Probability of causation}

\begin{proof}[Proof of proposition (\ref{prop:1})]
\begin{align*}
    \tau(x) &= P(Y(0) = 0 \mid Y(1) = 1, X = x) \\
    &= P(Y(0) = 0 \mid Y(1) = 1, A = 1, X = x) \\
    &= P(Y(0) = 0 \mid Y = 1, A = 1, X = x) \\
    &= \frac{P(Y - Y(0) = 1 \mid A = 1, X = x)}{P(Y = 1 \mid A = 1, X = x)} \\
    &= 1 - \frac{P(Y(0) = 1 \mid A = 1, X = x)}{P(Y = 1 \mid A = 1, X = x)} \\
    &= 1 - \frac{P(Y = 1 \mid A = 0, X = x)}{P(Y = 1 \mid A = 1, X = x)}
\end{align*}

\noindent where the first line follows by definition, the second by A-Y ignorability, the third by Y consistency, the fourth by Bayes' rule, the fifth by Y monotonicity, and the final equality by A-Y ignorability and Y consistency. 
\end{proof}

\begin{remark}
The equivalence of $\tau(x)$ and $\tilde{\tau}(x)$ is implied in the proof, following from the first three equalities.
\end{remark}

\subsubsection{Total mediated probability of causation}

We invoke assumptions (\ref{ass:consistency}) and (\ref{ass:ypositivity}). We further assume:

\begin{assumptionp}{\ref*{ass:monotonicity}$'$}[Y-M monotonicity]\label{ass:monotonicity1}
\begin{align*}
M(1) &\ge M(0) \\
Y(1, 1) &\ge Y(0, m), \qquad m = 0,1
\end{align*}
\end{assumptionp}

\begin{assumptionp}{\ref*{ass:aym}$'$}[A-YM ignorability]
\begin{align*}
A &\perp \{Y(1, 1), Y(0, m), M(1), M(0)\} \mid X, \qquad m = 0,1 \\
\end{align*}
\end{assumptionp}

\begin{assumptionp}{\ref*{ass:xworld}$'$}[Cross-world ignorability]
\begin{align*}
\{M(0), M(1)\} \perp Y(0, m) \mid X, \qquad m = 0,1 \\
\end{align*}
\end{assumptionp}

\begin{assumptionp}{\ref*{ass:positivity}$'$}[A-M positivity]\label{ass:positivity1}
\begin{align*}
P\{P(A = 0, M = m \mid X) &\ge \epsilon\} = 1,\qquad \epsilon > 0, m = 0, 1 \\
P\{P(A = 1, M = 1 \mid X) &\ge \epsilon\} = 1
\end{align*}
\end{assumptionp}

\begin{proof}[Proof of total mediated probability of causation identification]
We invoke assumptions (\ref{ass:consistency}) and (\ref{ass:ypositivity}), as well as assumptions (\ref{ass:monotonicity1})-(\ref{ass:positivity1}) above.

\begin{align}
\nonumber&P(Y(0, M(0)) = 0 \mid Y(1, M(1)) = 1, M(1) = 1, x) \\
&= \frac{P(Y(0, M(0)) = 0, Y(1, 1) = 1 \mid M(1) = 1, x)}{P(Y(1, 1) = 1 \mid M(1) = 1, x)} \\
\nonumber &= 1 - \frac{P(Y(0, M(0)) = 1 \mid M(1) = 1, x)}{P(Y(1, 1) = 1 \mid M(1) = 1, x)} \\
\label{eqn:start1}&= 1 - \frac{P(Y(0, M(0)) = 1 \mid M(1) = 1, x)}{P(Y = 1 \mid A = 1, M = 1, x)} 
\end{align}
where the first equality holds by Bayes' rule and consistency, the second by Y-M monotonicity, and the third by A-YM ignorability and consistency. By the law of iterated expectation and Y-M consistency we further obtain:

\begin{align*}
&P(Y(0, M(0)) = 1 \mid M(1) = 1, x) \\
&= \sum_m P(Y(0, m) = 1 \mid M(0) = m, M(1) = 1, x)P(M(0) = m \mid M(1) = 1, x) \\
&= \sum_m P(Y(0, m) = 1 \mid M(0) = m, M(1) = 1, x)w_m(x)
\end{align*}
By A-YM ignorability, cross-world ignorability, and consistency we further obtain that:

\begin{align*}
&P(Y(0, m) = 1 \mid M(0) = m, M(1) = 1, x)\\
&= P(Y(0, m) = 1 \mid x) \\
&= P(Y = 1 \mid A = 0, M = m, x)
\end{align*}
Finally, Proposition \ref{prop:1} and the corresponding assumptions imply that:
\begin{align*}
w_0(x) = 1 - \frac{P(M = 1 \mid A = 0, x)}{P(M = 1 \mid A = 1, x)}
\end{align*}
noting that $w_1(x) = 1 - w_0(x)$. Combining the expressions yields the result. 
\end{proof}

\begin{remark}
The equivalence of $\tilde{\delta}(x)$ and $\delta(x)$, and therefore the identification of $\tilde{\delta}(x)$, follows by noting that:

\begin{align*}
&P(Y(0, M(0)) = 0 \mid Y = 1, M = 1, A = 1, x) \\
&= P(Y(0, M(0)) = 0 \mid Y(1, 1) = 1, M(1) = 1, A = 1, x) \\
&= P(Y(0, M(0)) = 0 \mid Y(1, 1) = 1, M(1) = 1, x) \\
&= P(Y(0, M(0)) = 0 \mid Y(1, M(1)) = 1, M(1) = 1, x)
\end{align*}

\noindent where the first equality holds by consistency, the second by Y-AM ignorability, and the final equality by consistency.
\end{remark}

\begin{remark}
In contrast to the positivity assumption in equation (\ref{ass:positivity}), we can allow that $P(Y = 1, M = 0 \mid x) = 0$ for any value of $x$.
\end{remark}

\begin{remark}
Identification of the total mediated probability of causation requires relatively weaker monotonicity assumptions than stated in equation (\ref{ass:monotonicity}), allowing for potentially negative direct and indirect effects, whereas assumption (\ref{ass:monotonicity}) essentially restricts that these effects are non-negative. We discuss the monotonicity conditions further in Remark \ref{rmk:monotonicity}.
\end{remark}

\subsubsection{Probability of indirect causation}

\begin{assumptionp}{\ref*{ass:aym}$^\star$}[A-YM ignorability]\label{ass:aym2}
\begin{align*}
A \perp \{Y(1, 1), Y(1, 0), M(0), M(1)\} \mid X, \qquad m = 0,1 \\
\end{align*}
\end{assumptionp}

\begin{assumptionp}{\ref*{ass:xworld}$^\star$}[Cross-world ignorability]
\begin{align*}
\{M(0), M(1)\} \perp \{Y(1, 0), Y(1, 1)\} \mid X, \qquad m = 0,1 \\
\end{align*}
\end{assumptionp}

\begin{assumptionp}{\ref*{ass:positivity}$^\star$}[A-M positivity]\label{ass:positivity2}
\begin{align*}
&P\{P(A = 1, M = m \mid X) \ge \epsilon\} = 1, \qquad \epsilon > 0, m = 0, 1 \\
&P\{P(A = 0 \mid X) \ge \epsilon\} = 1
\end{align*}
\end{assumptionp}

\begin{proof}[Proof of probability of indirect causation identification]

We invoke assumptions (\ref{ass:consistency}), (\ref{ass:monotonicity}), (\ref{ass:ypositivity}), and assumptions (\ref{ass:aym2})-(\ref{ass:positivity2}) above.

\begin{align}
\nonumber&P(Y(1, M(0)) = 0, Y(0, M(0)) = 0 \mid Y(1, M(1)) = 1, M(1) = 1, X = x) \\
\nonumber&= P(Y(1, M(0)) = 0 \mid Y(1, 1) = 1, M(1) = 1, X = x) \\
\nonumber&= \sum_m P(Y(1, M(0)) = 0 \mid Y(1, 1) = 1, M(1) = 1, M(0) = m, x) \\
\nonumber&\times P(M(0) = m \mid Y(1, 1) = 1, M(1) = 1, x) \\
\nonumber&= \sum_m P(Y(1, m) = 0 \mid Y(1, 1) = 1, M(1) = 1, M(0) = m, x)P(M(0) = m \mid M(1) = 1, x) \\
\label{eqn:ident1}&= P(Y(1, 0) = 0 \mid Y(1, 1) = 1, X = x)P(M(0) = 0 \mid M(1) = 1, X = x)
\end{align}

\noindent where the first equality follows by Y-M monotonicity and Y-M consistency, the second by the law of iterated expectations, the third by cross-world ignorability and Y-M consistency, and the fourth by the fact that $P(Y(1, 1) = 0 \mid Y(1, 1) = 1) = 0$ and cross-world ignorability. Now consider the first term:

\begin{align*}
&P(Y(1, 0) = 0 \mid Y(1, 1) = 1, X = x) \\
&=\frac{P(Y(1, 0) = 0, Y(1, 1) = 1 \mid X = x)}{P(Y(1, 1) = 1 \mid X = x)} \\
&=\frac{P(Y(1, 0) = 0, Y(1, 1) = 1 \mid X = x)}{P(Y(1, 1) = 1 \mid A = 1, M = 1, X = x)} \\
&=1 - \frac{P(Y(1, 0) = 1 \mid X = x)}{P(Y = 1 \mid A = 1, M = 1, X = x)} \\
&=1 - \frac{P(Y = 1 \mid A = 1, M = 0, X = x)}{P(Y = 1 \mid A = 1, M = 1, X = x)}
\end{align*}
where the first equality holds by Bayes' rule, the second by A-YM and cross-world ignorability, the third by Y-M monotonicity and Y-M consistency, and the final equality by A-YM and cross-world ignorability and Y-M consistency. The result follows from applying Proposition~\ref{prop:1} to the second term (which is simply the probability of causation with respect to the mediator) and multiplying the expressions.
\end{proof}

\begin{remark}
The equivalence of $\psi(x)$ and $\tilde{\psi}(x)$, and therefore the identification of $\tilde{\psi}(x)$ can be seen by noting that:
\begin{align*}
&P(Y(1, M(0)) = 0, Y(0, M(0)) = 0 \mid Y = 1, M = 1, A = 1, X = x) \\
&=P(Y(1, M(0)) = 0, Y(0, M(0)) = 0 \mid Y(1, 1) = 1, M(1) = 1, A = 1, X = x) \\
&=P(Y(1, M(0)) = 0 \mid Y(1, 1) = 1, M(1) = 1, X = x) \\
&=P(Y(1, M(0)) = 0, Y(0, M(0)) = 0 \mid Y(1, M(1)) = 1, M(1) = 1, X = x) \\
\end{align*}
\noindent where the first equality uses consistency, the second monotonicity and A-YM ignorability, and the final equality again uses consistency and monotonicity.
\end{remark}

\begin{remark}
The positivity requirement for the identification of $\psi(x)$ allows that for any $x$, either $P(A = 0, M = 1 \mid x) = 0$ or $P(A = 0, M = 0 \mid x) = 0$ (but not both, as we do require $P(A = 0 \mid x) \ge \epsilon$ for all $x$).
\end{remark}

\begin{remark}
Identification of $\zeta(x)$ follows directly from the proof of Theorem \ref{theorem1}, since $\delta(x) = \tau(x) + \zeta(x)$ by definition, and therefore the identification result requires the union of two sets of identifying assumptions outlined above.
\end{remark}

\begin{remark}\label{rmk:monotonicity}
One assumption that we do not require for identification, but may nevertheless be useful for interpretation, is a monotonicity condition that $Y(0, 1) \ge Y(0, 0)$. 

To see why, consider the case where for an individual $M(1) = 1$ and $M(0) = 0$; that is, the exposure induces the mediator; and we have the following potential outcomes: $Y(1, 1) = 1, Y(1, 0) = 1, Y(0, 1) = 0, Y(0, 0) = 1$. Under our definitions, we would say that for this individual, there was no effect in total, no indirect effect, and no direct effect. Yet in fact, for this individual, there is an indirect effect that acts only in the absence of the exposure (a so-called ``pure'' indirect effect), but a mediated interaction term that undoes this effect in the presence of the exposure \cite{vanderweele2014causal}. While mathematically this does not present a problem, this scenario may be conceptually challenging to allow for in applied examples. One can always further assume the monotonicity condition that $Y(0, 1) \ge Y(0, 0)$ to rule this out, if desired.
\end{remark}

\newpage

\subsection{Estimation}

\begin{proof}[Proof of Propositions \ref{proposition2}, \ref{proposition3}]
This proof follows directly from \cite{cuellar2020non} and applying the chain-rule to the functionals $\mathbb{E}[\psi(X)]$ and $\mathbb{E}[\delta(X)]$, treating $X$ as discrete \cite{kennedy2022eifs}. For example, for a fixed $\beta^\star$, we let $h(X; \beta^\star) = \frac{\partial g(X; \beta^\star)}{\partial \beta}w(X)$:

\begin{align*}
&\text{EIF}[\mathbb{E}[h(X; \beta^\star)(\psi(X) - g(X; \beta^\star)))]] \\
&= \text{EIF}[\sum_x h(x; \beta^\star)(\psi(x) - g(x; \beta^\star))p(x)] \\
&= \sum_x h(x; \beta^\star)\text{EIF}(\psi(x))p(x) + \sum_x w(x)(\psi(x) - g(x; \beta^\star))\text{EIF}(p(x)) \\
&= \sum_x h(x; \beta^\star)[\varphi_3(O; \eta) - \varphi_1(O; \eta) - \varphi_2(O; \eta)] + h(x; \beta^\star)(\psi(x) - g(x; \beta^\star))[1(X = x) - p(x)] \\
&= h(X; \beta^\star)[\varphi_3(O; \eta) - \varphi_1(O; \eta) - \varphi_2(O; \eta) + \psi(X) - g(X; \beta^\star)] - \Psi(\beta^\star) 
\end{align*}

\noindent where we again obtain $\text{EIF}[\psi(x)]$ by treating $x$ as discrete to obtain:

\begin{align*}
\text{EIF}[\psi(x)] &= \text{EIF}\left(1 - \frac{\mu_{10}(x)}{\mu_{11}(x)}\right)\left(1 - \frac{\gamma_0(x)}{\gamma_1(x)}\right) \\
&= \text{EIF}\left[\frac{\mu_{10}(x)}{\mu_{11}(x)}\frac{\gamma_0(x)}{\gamma_1(x)}\right] - \text{EIF}\left[\frac{\mu_{10}(x)}{\mu_{11}(x)}\right] - \text{EIF}\left[\frac{\gamma_0(x)}{\gamma_1(x)}\right] \\
&= \varphi_3(O; \eta) - \varphi_1(O; \eta) - \varphi_2(O; \eta)
\end{align*}

\noindent and applying the chain-rule to get each component part. To be precise, note that:

\begin{align*}
\text{EIF}\left[\frac{\mu_{10}(x)}{\mu_{11}(x)}\frac{\gamma_0(x)}{\gamma_1(x)}\right] &= \text{EIF}\left[\frac{\mu_{10}(x)}{\mu_{11}(x)}\right]\frac{\gamma_0(x)}{\gamma_1(x)} + \text{EIF}\left[\frac{\gamma_0(x)}{\gamma_1(x)}\right]\frac{\mu_{10}(x)}{\mu_{11}(x)} \\
&= \varphi_1(O; \eta)\frac{\gamma_0(x)}{\gamma_1(x)} + \varphi_2(O; \eta) \frac{\mu_{10}(x)}{\mu_{11}(x)}
\end{align*}

\noindent Second, by the quotient rule we obtain:

\begin{align*}
\varphi_1(O; \eta) &= \frac{1}{\mu_{11}(x)^2}(\text{EIF}[\mu_{10}(x)]\mu_{11}(x) - \text{EIF}[\mu_{11}(x)]\mu_{10}(x))  \\
\varphi_2(O; \eta) &= \frac{1}{\gamma_1(x)^2}(\text{EIF}[\gamma_0(x)]\gamma_1(x) - \text{EIF}[\gamma_1(x)]\gamma_0(x))
\end{align*}

\noindent Finally, recall that $\text{EIF}[E[Y \mid X = x]] = \frac{1(X = x)}{P(X = x)}[Y - E[Y \mid X = x]]$ \cite{kennedy2022eifs}. It follows that for any $(a, m, x)$:

\begin{align*}
\text{EIF}[\mu_{am}(x)] &= \frac{1(A = a, M = m, X = x)}{P(A = a, M = m, X = x)}(Y - \mu_{am}(X)), \\
\text{EIF}[\gamma_a(x)] &= \frac{1(A = a, X = x)}{P(A = a, X = x)}(M - \gamma_a(x))
\end{align*}

\noindent The result follows by combining expressions. The same logic can be used to derive the influence function for the moment conditions $\Psi_{ptc}(\alpha^\star)$ and $\Psi_{pnde}(\rho^\star)$, where for this final derivation we can use the fact that since $\zeta(x) = \delta(x) - \psi(x)$, then $\text{EIF}[\zeta(x)] = \text{EIF}[\delta(x)] - \text{EIF}[\psi(x)]$ \cite{kennedy2022eifs}. To flesh this out slightly more, we sketch the derivation of $\text{EIF}[\delta(x)]$ below:

\begin{align*}
\text{EIF}[\delta(x)] &= \text{EIF}\left(\left[1-\frac{\mu_{00}(x)}{\mu_{11}(x)}\right]\left[1 - \frac{\gamma_0(x)}{\gamma_1(x)}\right] + \left[1-\frac{\mu_{01}(x)}{\mu_{11}(x)}\right]\left[\frac{\gamma_0(x)}{\gamma_1(x)}\right]\right) \\
&= \text{EIF}\left(\frac{\mu_{00}(x)}{\mu_{11}(x)}\frac{\gamma_0(x)}{\gamma_1(x)} - \frac{\mu_{10}(x)}{\mu_{11}(x)}\frac{\gamma_0(x)}{\gamma_1(x)} - \frac{\mu_{00}(x)}{\mu_{11}(x)}\right) \\
&= \text{EIF}\left(\frac{\mu_{00}(x)}{\mu_{11}(x)}\right)\frac{\gamma_0(x)}{\gamma_1(x)} +
\frac{\mu_{00}(x)}{\mu_{11}(x)}\text{EIF}\left(\frac{\gamma_0(x)}{\gamma_1(x)}\right) \\
&- \text{EIF}\left(\frac{\mu_{10}(x)}{\mu_{11}(x)}\right)\frac{\gamma_0(x)}{\gamma_1(x)} - \text{EIF}\left(\frac{\gamma_0(x)}{\gamma_1(x)}\right)\frac{\mu_{10}(x)}{\mu_{11}(x)} - \text{EIF}\left(\frac{\mu_{00}(x)}{\mu_{11}(x)}\right) 
\end{align*}

\noindent By the quotient rule, we know generally that:

\begin{align*}
\text{EIF}\left(\frac{\mu_{am}(x)}{\mu_{a'm'}(x)}\right) &= \frac{1}{\mu_{a'm'}(x)^2}\left(\frac{I(A = a, M = m, X = x)}{P(A = a, M = m, X = x)}[Y - \mu_{am}(x)]\mu_{a'm'}(x)\right) \\
&- \left(\frac{I(A = a', M = m', X = x)}{P(A = a', M = m', X = x)}[Y - \mu_{a'm'}(x)]\mu_{am}(x)\right)
\end{align*}

\noindent Applying this to the relevant quantities above and plugging the previously derived result for $\text{EIF}\left[\frac{\gamma_0(x)}{\gamma_1(x)}\right]$ gives the result.

\end{proof}

\begin{proof}[Proof of Theorem~\ref{theorem3} (sketch)]
The proof follows almost directly Lemma 3 in \cite{kennedy2021semiparametric}. To formally complete the proof, we would also need to show that the quantity $P[\phi(O; \beta_0, \hat{\eta}) - \phi(O; \beta_0, \eta)]$ is second-order in the nuisance estimation to derive sufficient conditions where this quantity is $o_p(n^{-1/2})$, where $P[f(O)] = \mathbb{E}[f(O) \mid D_0^n]$ for a training dataset $D_0^n$ used to estimate the nuisance parameters $\eta$. The conditions for this to hold will then be satisfied when $\|\hat{\eta} - \eta\| = o_p(n^{-1/4})$.
\end{proof}

\newpage

\section{Probability of causation: other decompositions}\label{app:othres}

As noted in Section \ref{sec:identification}, we can also decompose the probability of causation decomposes into analogous terms that we described in our paper that do not condition on $M(1)$. We call these the total probabilities of indirect and direct causation and denote them $\alpha(x)$ and $\beta(x)$, respectively. More formally, we can derive these quantities by applying the law of iterated expectation the the probability of causation:

\begin{align*}
P(Y(0) = 0 \mid Y(1) = 1, x) &= P(Y(1,M(0)) = 0, Y(0, M(0)) = 0 \mid Y(1, M(1)) = 1, x) \\
&+ P(Y(1, M(0)) = 1, Y(0, M(0)) = 0) \mid Y(1, M(1)) = 1, x) \\
&= \alpha(x) + \beta(x)
\end{align*}

\noindent Proposition \ref{prop:10} shows that these expressions are identifiable in the observed data distribution given $O = (X, A, M, Y)$. 

\begin{proposition}[Identification of total indirect and direct probabilities of causation]\label{prop:10}
Under assumptions (\ref{ass:consistency})-(\ref{ass:ypositivity}), the total probabilities of indirect and direct causation can be expressed as:
\begin{align*}
\alpha(x) &= \left(1 - \frac{\mu_{10}(x)}{\mu_{11}(x)}\right)\left(1 - \frac{\gamma_0(x)}{\gamma_1(x)}\right)P(M = 1 \mid A = 1, Y = 1, x) \\
\beta(x) &= \left(1 - \frac{\mu_0(x)}{\mu_1(x)}\right) - \alpha(x)
\end{align*}
\noindent where $\mu_a(x) = P(Y = 1 \mid A = a, x)$.
\end{proposition}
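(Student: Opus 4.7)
The plan is to reduce Proposition \ref{prop:10} to Theorem \ref{theorem1} and Proposition \ref{prop:1} by leveraging the law of total probability over the conditioning event $M(1)$, together with the monotonicity restrictions of Assumption \ref{ass:monotonicity}. The core idea is that $\alpha(x)$ equals the probability of indirect causation $\psi(x)$ restricted to the $M(1) = 1$ stratum, reweighted by the probability of being in that stratum, because the $M(1) = 0$ stratum contributes zero under monotonicity.

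First, I would apply iterated expectations to $\alpha(x)$ over $M(1)$:
\begin{align*}
\alpha(x) &= \psi(x)\, P(M(1) = 1 \mid Y(1, M(1)) = 1, X = x) \\
&\quad + \psi'(x)\, P(M(1) = 0 \mid Y(1, M(1)) = 1, X = x),
\end{align*}
where $\psi'(x)$ is the analogue of $\psi(x)$ conditioning on $M(1) = 0$. Next I would argue $\psi'(x) = 0$: on the event $\{M(1) = 0\}$, the monotonicity condition $M(1) \ge M(0)$ forces $M(0) = 0$, so $Y(1, M(0)) = Y(1, 0)$; but the conditioning event $Y(1, M(1)) = 1$ together with $M(1) = 0$ gives $Y(1, 0) = 1$, which is incompatible with the event $\{Y(1, M(0)) = 0\}$ inside $\psi'(x)$. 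This collapses the decomposition to $\alpha(x) = \psi(x) \cdot P(M(1) = 1 \mid Y(1) = 1, X = x)$.

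Then I would identify the weight $P(M(1) = 1 \mid Y(1) = 1, X = x)$. Using A-YM ignorability (Assumption \ref{ass:aym}) to introduce conditioning on $A = 1$, followed by Y-M consistency (Assumption \ref{ass:consistency}), this weight equals $P(M = 1 \mid A = 1, Y = 1, X = x)$. Combining this with the identification of $\psi(x)$ from Theorem \ref{theorem1} yields the stated expression for $\alpha(x)$. For $\beta(x)$, I would invoke the decomposition $\alpha(x) + \beta(x) = \tau(x)$ noted in Section \ref{sec:identification}, then apply Proposition \ref{prop:1} to identify $\tau(x) = 1 - \mu_0(x)/\mu_1(x)$, and subtract.

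The main obstacle I anticipate is justifying the vanishing of $\psi'(x)$ cleanly: one has to be careful that the implicit conditioning on $Y(1) = 1$ (shorthand for $Y(1, M(1)) = 1$) combined with the event $\{M(1) = 0\}$ really does force $Y(1, 0) = 1$, and that this is consistent with the full set of monotonicity inequalities in Assumption \ref{ass:monotonicity} rather than requiring additional structure. The rest of the argument is essentially bookkeeping on top of the already-proved identification results.
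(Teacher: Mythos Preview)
Your proposal is correct and follows essentially the same route as the paper's proof: iterate expectations over $M(1)$, use monotonicity to kill the $M(1)=0$ contribution (your $\psi'(x)=0$ argument is exactly the content of the paper's Proposition~\ref{prop:11}), identify the surviving weight $P(M(1)=1\mid Y(1)=1,x)$ via A-YM ignorability and consistency, and then invoke the prior identification of $\psi(x)$ and $\tau(x)$. The only cosmetic difference is that the paper iterates over both $M(0)$ and $M(1)$ simultaneously before collapsing terms, whereas you iterate over $M(1)$ alone and recognize the conditional pieces as $\psi(x)$ and $\psi'(x)$ directly; this makes your version slightly more modular but is the same argument.
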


\begin{remark}\label{rmk:1}
The identifying expression for $\alpha(x)$ is simply the identifying expression for $\psi(x)$ scaled by the identifying expression for $P(M(1) = 1 \mid Y(1) = 1, x)$. This result is highly intuitive: under monotonicity, an indirect effect can only occur on the stratum where $M(1) = 1$: if $M(1) = 0$, an effect can only have been induced via the direct $A \to Y$ pathway. Therefore, the probability of indirect causation is simply a scaled version of the probability of indirect causation. The total probability of direct causation is then simply the probability of causation minus the total probability of indirect causation.
\end{remark}

\begin{remark}
When we do not observe $M$, we cannot identify these expressions given the observed data $(X, A, Y)$. On the other hand, once we observe $M$, we may able to condition on this event, and so, as argued in the text, the expressions $\alpha(x)$ and $\beta(x)$ have no obvious practical relevance.
\end{remark}

\begin{remark}
We can also define terms $\tilde{\alpha}(x)$ and $\tilde{\beta}(x)$ that condition on the exposure and the observed rather than potential outcomes. These quantities have the same identifying expression under our assumptions.
\end{remark}

We also derived terms in Section \ref{sec:identification} that were analogous to the total mediated probability of causation $\delta(x)$, but defined on the stratum where $M(1) = 0$ rather than the stratum where $M(1) = 1$. We called this term $\delta'(x)$. We formally consider identification of this term; however, we first note that by the law of iterated expectations, we can similarly decompose this term into an analogous indirect and direct probability of causation, $\psi'(x)$ and $\zeta'(x)$.

\begin{align*}
\delta'(x) &= P(Y(0) = 0 \mid Y(1, M(1)) = 1, M(1) = 0, x) \\
&= P(Y(1, M(0)) = 0, Y(0, M(0)) = 0 \mid Y(1, M(1)) = 1, M(1) = 0, X = x) \\
&+ P(Y(1, M(0)) = 1, Y(0, M(0)) = 0 \mid Y(1, M(1)) = 1, M(1) = 0, X = x) \\
&= \psi'(x) + \zeta'(x)
\end{align*}

\noindent Proposition \ref{prop:11} presents an identification result for these three terms. The proofs of both propositions \ref{prop:10} and \ref{prop:11} are below.

\begin{proposition}[Identification of $\delta'(x), \psi'(x), \zeta'(x)$]\label{prop:11}
    Under assumptions (\ref{ass:consistency})-(\ref{ass:ypositivity}), $\delta'(x)$, $\psi'(x)$, and $\zeta'(x)$ are identified by the following expressions:
    \begin{align*}
        \zeta'(x) &= \left(1 - \frac{\mu_{00}(x)}{\mu_{10}(x)}\right) \\
        \delta'(x) &= \zeta'(x) \\
        \psi'(x) &= 0
    \end{align*}
\end{proposition}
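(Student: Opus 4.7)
The plan is to exploit the monotonicity assumption $M(1) \ge M(0)$ to collapse the conditioning event $\{M(1) = 0\}$ onto the joint event $\{M(1) = 0, M(0) = 0\}$, and then show that on this stratum the ``indirect'' term is vacuous because $Y(1, M(0))$ and $Y(1, M(1))$ coincide by consistency of the potential outcomes in their arguments. First I would observe that $M(1) \ge M(0)$ together with binary $M$ gives $\{M(1) = 0\} \subseteq \{M(0) = 0\}$, so throughout the argument I may freely add $M(0) = 0$ to every conditioning set that already contains $M(1) = 0$.

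The probability $\psi'(x)$ should fall out first. On the conditioning event $\{Y(1, M(1)) = 1, M(1) = 0\}$ we have $Y(1, 0) = 1$; since $M(0) = 0$ on this same event, the potential outcome $Y(1, M(0))$ is identically $Y(1, 0)$, so the sub-event $\{Y(1, M(0)) = 0\}$ is empty within the conditioning event. Hence $\psi'(x) = 0$. The identity $\delta'(x) = \zeta'(x)$ then follows immediately from the decomposition $\delta'(x) = \psi'(x) + \zeta'(x)$ established in the same way as in Section \ref{sec:identification}.

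It remains to identify $\zeta'(x)$. After substituting $M(0) = 0$ the target simplifies to
\begin{align*}
\zeta'(x) = P(Y(0,0) = 0 \mid Y(1,0) = 1, M(1) = 0, X = x).
\end{align*}
Cross-world ignorability ($\{M(0), M(1)\} \perp \{Y(0,m), Y(1,m)\} \mid X$) lets me drop the conditioning on $M(1) = 0$. Then I would apply Bayes' rule, use the $Y(1,0) \ge Y(0,0)$ piece of Assumption \ref{ass:monotonicity} to get
\begin{align*}
P(Y(0,0) = 1 \mid Y(1,0) = 1, X = x) = \frac{P(Y(0,0) = 1 \mid X = x)}{P(Y(1,0) = 1 \mid X = x)},
\end{align*}
and finally use A-YM ignorability together with Y-M consistency to replace numerator and denominator by $\mu_{00}(x)$ and $\mu_{10}(x)$ respectively. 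This yields $\zeta'(x) = 1 - \mu_{00}(x)/\mu_{10}(x)$.

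The main subtlety, as flagged in Remark \ref{rmk:1}, is not any single algebraic step but rather the correct use of monotonicity to ensure that $M(0)$ is pinned down by the conditioning event, since this is what kills the indirect channel outright and justifies treating $\zeta'(x)$ as a controlled direct-effect style probability of causation. Once that observation is in hand, the remaining manipulations are essentially the same as in the proof of Proposition \ref{prop:1}, just with $(A, Y)$ replaced by the pair $(A, Y)$ at fixed $M = 0$, so I would cite that proof rather than rewrite the algebra.
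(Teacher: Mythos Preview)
Your proposal is correct and follows essentially the same route as the paper's proof: use $M(1)\ge M(0)$ to force $M(0)=0$ on the conditioning stratum, observe that this makes $Y(1,M(0))=Y(1,M(1))=Y(1,0)$ so $\psi'(x)=0$, and then identify $\zeta'(x)$ by dropping $M(1)=0$ via cross-world ignorability and reducing to a Proposition~\ref{prop:1}-style ratio. The only small imprecision is that the final step replacing $P(Y(a,m)=1\mid X=x)$ by $\mu_{am}(x)$ also requires cross-world ignorability (to pass from conditioning on $A$ to conditioning on $A,M$), not just A-YM ignorability and consistency; the paper handles this by citing the proof of Theorem~\ref{theorem1} rather than Proposition~\ref{prop:1}.
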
   

\begin{proof}[Proof of Proposition \ref{prop:10}]
\begin{align*}
\alpha(x) &= \sum_{m_0, m_1 = 0, 1}P(Y(1, m_0) = 0, Y(0, m_0) = 0 \mid M(1) = m_1, M(0) = m_0, Y(1, m_1) = 1, x) \\
&\times P(M(0) = m_0, M(1) = m_1 \mid Y(1) = 1, x) \\
&= \sum_{m_0, m_1 = 0, 1}P(Y(1, m_0) = 0 \mid M(1) = m_1, M(0) = m_0, Y(1, m_1) = 1, x) \\
&\times P(M(0) = m_0, M(1) = m_1 \mid Y(1) = 1, x) \\
&= \sum_{m_0, m_1 = 0, 1}P(Y(1, m_0) = 0 \mid Y(1, m_1) = 1, x)P(M(0) = m_0, M(1) = m_1 \mid Y(1) = 1, x) \\
&= P(Y(1, 0) = 0 \mid Y(1, 1) = 1, x)P(M(0) = 0, M(1) = 1 \mid Y(1) = 1, x) \\
&= P(Y(1, 0) = 0 \mid Y(1, 1) = 1, x)P(M(0) = 0 \mid M(1) = 1, Y(1, 1) = 1, x)P(M(1) = 1 \mid Y(1) = 1, x) \\
&= \underbrace{P(Y(1, 0) = 0 \mid Y(1, 1) = 1, x)P(M(0) = 0 \mid M(1) = 1, x)}_{T_1}\underbrace{P(M(1) = 1 \mid Y(1) = 1, x)}_{T_2} 
\end{align*}

\noindent where the first equality follows by iterating expectations over the conditional distributions of $M(1)$ and $M(0)$ and consistency, the second by monotonicity, and the third by cross-world ignorability. The fourth equality follows from the fact that when $m_1 = m_0$, $P(Y(1, m_0) = 0 \mid Y(1, m_1) = 1) = 0$, and by monotonicity, $P(M(1) = 0, M(0) = 1) = 0$. The fifth equality follows by definition, and the sixth by cross-world ignorability. The term $T_1$ has previously been identified in the proof of identification of $\psi(x)$; the final results holds by by applying A-YM ignorability and consistency to $T_2$ (noting that we have used positivity assumptions throughout). Identification of $\beta(x)$ follows from the fact that $\alpha(x) + \beta(x) = \tau(x)$ and applying Proposition $\ref{prop:1}$ to identify $\tau(x)$.
\end{proof}

\begin{proof}[Proof of Proposition \ref{prop:11}]
\begin{align*}
\psi'(x) &=P(Y(1, M(0)) = 0, Y(0, M(0)) = 0 \mid Y(1, M(1)) = 1, M(1) = 0, X = x) \\
&=P(Y(1, M(0)) = 0, Y(0, M(0)) = 0 \mid Y(1, 0) = 1, M(1) = 0, X = x) \\
&=P(Y(1, 0) = 0, Y(0, 0) = 0 \mid Y(1, 0) = 1, M(1) = 0, X = x) \\
&=0
\end{align*}

\noindent where the first equality follows by definition, the second by consistency, the third by monotonicity ($M(1) = 0 \implies M(0) = 0$) and consistency, and the final equality by the fact that $P(Y(1, 0) = 0 \mid Y(1, 0) = 1) = 0$. Next, we consider $\zeta'(x)$:

\begin{align*}
\zeta'(x) &=P(Y(1, M(0)) = 1, Y(0, M(0)) = 0 \mid Y(1, M(1)) = 1, M(1) = 0, X = x) \\
&=P(Y(1, 0) = 1, Y(0, 0) = 0 \mid Y(1, 0) = 1, M(1) = 0, X = x) \\
&=P(Y(0, 0) = 0 \mid Y(1, 0) = 1, M(1) = 0, X = x) \\
&=P(Y(0, 0) = 0 \mid Y(1, 0) = 1, X = x) \\
&=\left(1 - \frac{\mu_{00}(x)}{\mu_{10}(x)}\right)
\end{align*}

\noindent where the first equality follows by definition, the second by consistency and monotonicity, the third by the fact that $P(Y(1, 0) = 1 \mid Y(1, 0) = 1, M(1) = 0, x) = 1$, the fourth by cross-world ignorability, and the final follows from the proof of Theorem \ref{theorem1}. Since $\delta'(x) = \zeta'(x) + \psi'(x)$, the final result follows.
\end{proof}

\begin{remark}
As noted both in Section \ref{sec:identification} and Remark \ref{rmk:1}, the probability of indirect causation on this stratum is equal to zero, essentially due to the monotonicity requirements. As a result, the total mediated probability of causation on this stratum is equal to the probability of direct causation.
\end{remark}

\newpage

\section{Simulation specifications}\label{app:specs}

We describe the data generating processes for the simulation studies. First, for each simulation we make $n = 1,000$ i.i.d. draws of a single covariate $X \sim \mathcal{U}(0, 1)$. Next, we generate the propensity scores, potential mediators, and potential outcomes from the following functions: 

\begin{align*}
    \pi(X) = \text{expit}(-1 - 0.5X) \\
    \gamma_0(X) = 0.25 - 0.1X - 0.1X^2 \\
    \tilde{\gamma}_1(X) = 0.65 - 0.1X - 0.2X^2 \\
    \mu_{00}(X) = 0.5 - 0.1X - 0.2X^2 \\
    \mu_{01}(X) = 0.3 - 0.1X - 0.1X^2 \\
    \tilde{\mu}_{10}(X) = 0.6 - 0.1X - 0.1X^2 \\
    \tilde{\mu}_{11}(X) = 0.4 - 0.1X - 0.2X^2
\end{align*}

\noindent We then draw the exposure, potential mediators, and potential outcomes as follows:

\begin{align*}
    A \mid X &\sim \text{Bern}(\pi(X)) \\
    M(0) \mid X &\sim \text{Bern}(\gamma_0(X)) \\
    M(1) \mid X, M(0) = 0 &\sim \text{Bern}(\tilde{\gamma}_1(X)) \\
    Y(0, 0) \mid X &\sim \text{Bern}(\mu_{00}(X)) \\
    Y(0, 1) \mid X &\sim \text{Bern}(\mu_{01}(X)) \\
    Y(1, 0) \mid X, Y(0, 0) = 0 &\sim \text{Bern}(\tilde{\mu}_{10}(X)) \\
    Y(1, 1) \mid X, Y(0, 1) = 0, Y(1, 0) = 0 &\sim \text{Bern}(\tilde{\mu}_{11}(X))
\end{align*}

\noindent where we enforce the monotonicity constraints so that, for example, $M(0) = 1 \implies M(1) = 1$, $Y(0, 0) = 1 \implies Y(1, 0) = 1, Y(0, 1) = 1, Y(1, 1) = 1$, and so forth. This implies that:

\begin{align*}
    \gamma_1(X) &= \tilde{\gamma}_1(X)(1 - \gamma_0(X)) + \gamma_0(X) \\
    \mu_{10}(X) &= \tilde{\mu}_{10}(X)(1 - \mu_{00}(X)) + \mu_{00}(X) \\
    \mu_{11}(X) &= \tilde{\mu}_{11}(X)((1 - \mu_{01}(X))(1 - \tilde{\mu}_{10}(X))(1 - \mu_{00}(X))) \\ 
    &+ 1 - ((1 - \mu_{01}(X))(1 - \tilde{\mu}_{10}(X))(1 - \mu_{00}(X)))
\end{align*}

\noindent Finally, applying consistency we set:

\begin{align*}
M &= AM(1) + (1-A)M(0) \\
Y &= AY(1) + (1-A)Y(0) \\
Y(1) &= Y(1, 1)M(1) + Y(1, 0)(1-M(1)) \\
Y(0) &= Y(0, 1)M(0) + Y(0, 0)(1-M(0)) \\
Y(1, M(0)) &= Y(1, 1)M(0) + Y(1, 0)(1-M(0)) \\
Y(0, M(1)) &= Y(0, 1)M(1) + Y(0, 0)(1-M(1))
\end{align*}

\end{document}